\newcommand{\ind}{\mbox{ ind}}
\def\ran{\mathop{\rm Ran}\nolimits}
\newcommand {\bB}{{\mathbb B}}
\newcommand {\bC}{{\mathbb C}}
\newcommand {\bZ}{{\mathbb Z}}
\newcommand {\bN}{{\mathbb N}}
\newcommand {\bM}{{\mathbb M}}
\newcommand {\bR}{{\mathbb R}}
\newcommand {\bS}{{\mathbb S}}
\newcommand {\bH}{{\mathbb H}}
\newcommand {\bL}{{\mathbb L}}
\newcommand {\bI}{{\mathbb I}}
\newcommand{\cO}{{\mathcal O}}
\newcommand{\cshift}{{\bf T}}
\newcommand {\flux}{{\Phi}}
\newcommand {\bbM}{{\bf M}}
\newcommand {\bbC}{{\bf C}}
\newcommand {\bbP}{{\bf P}}
\newcommand{\be}{\begin{equation}}
\newcommand{\ee}{\end{equation}}
\newtheorem{thm}{Theorem} [section]
\newtheorem{lem}[thm]{Lemma}
\newtheorem{prop}[thm]{Proposition}
\newtheorem{proposition}[thm]{Proposition}
\newtheorem{definition}[thm]{Definition}
\newtheorem{cor}[thm]{Corollary}
\newtheorem {rem}[thm]{Remark}
\newtheorem {rems}[thm]{Remarks}
\newtheorem {example}[thm]{Example}
\newtheorem {examples}[thm]{Examples}
\title{ Engineering  stable quantum currents at bulk boundaries
\thanks{ Supported by
FONDECYT 1161732, 
and ECOS-Conicyt C15E10}\  \thanks{Supported by the LabEx PERSYVAL-Lab (ANR-11- LABEX- 0025-01) funded by the French program Investissement d'Avenir and ANR Grant NONSTOPS (ANR-17-CE40-0006-01)}}
\author{Joachim Asch  \thanks{CNRS, CPT, Aix Marseille Universit\'e, Universit\'e de Toulon, Marseille, France, asch@cpt.univ-mrs.fr},
Olivier Bourget
\thanks{
Departamento de Matem\'aticas
Pontificia Universidad Cat\'olica de Chile, Av. Vicu\~{n}a Mackenna 4860,
C.P. 690 44 11, Macul
Santiago, Chile},
Alain Joye
\thanks{
Universit\'e Grenoble Alpes, CNRS Institut Fourier, 38000 Grenoble, France}
}
\date{19/6/19}
\begin{document}
\maketitle

\begin{abstract}

We study transport properties of discrete quantum dynamical systems on the lattice, in particular Coined Quantum Walks and the Chalker--Coddington  model. We prove  existence of a non trivial charge transport and that the absolutely  continuous spectrum  covers the whole unit circle under mild assumptions. For Quantum Walks  we exhibit explicit constructions of coins which imply existence of stable directed quantum currents along classical curves. The results are of topological nature and independent of the details of the model. \end{abstract}

\section{Introduction}

We consider the signature of transport provided by the presence of a non trivial absolutely continuous component in the spectrum of the evolution operator of two classes of unitary network models: the celebrated Chalker-Coddington model of condensed matter physics, \cite{cc,kok}, which provides an effective description of one time step of  the motion of an electron in a plane subject to a strong perpendicular magnetic field and a random potential, and the abstract $d$-dimensional  Coined Quantum Walk which finds applications in several areas of Quantum Computing, \cite{VA,P}. Both models are characterized by a unitary operator $U$ on a Hilbert space $\ell^2(\bZ^d,\bC^n)$ which couples neighboring sites only. 

Moreover both classes are parametrised by a countable family of unitary finite-dimensional matrices that describe the local dynamics: the scattering matrices for the Chalker-Coddington model and the coin matrices for the Quantum Walks. This makes these models very versatile and they display a whole range of quantum dynamics depending on the choice of these matrices. In particular, it is known that dynamical Anderson localisation takes place for random versions of the Chalker-Coddington model, \cite{ABJ1,ABJ2}, and of Coined Quantum Walks, \cite{JM,ASW,J}. On the other hand  for Quantum Walks on trees there are localisation-delocalisation transitions,\cite{HJ} and the localisation length diverges with high coordination number  for  Balanced Random Quantum Walks, \cite{AJ}. Moreover, homogeneous parameters at infinity or the presence of boundaries induce absolutely continuous spectrum for boths models, \cite{JMa, ABJ3,ABJ4}.  For periodically driven unitary networks models it is known that the presence of boundaries and symmetries of the bulk imply occurence of currents, \cite{rlbl,dft,graftauber,ssb}.

By contrast, here we analyse transport properties of  models defined on the whole of $\bZ^d$.

To do so we use the topological properties of the self-adjoint flux operator out of a subspace $\ran(P)$ of an orthogonal projection
$$
\flux=U^*PU-P.
$$ 
 It is known that the index of $\Phi$ is invariant under small or compact variations of $U$ or $P$.  We prove in general that a non trivial index implies the appearance of a wandering subspace reducing a perturbation of the  evolution operator which becomes  a shift on the subspace, see Theorem \ref{thm:general}. This may be considered as a current and implies the occurence of gapless absolutely continuous spectrum under decay assumptions. As a by-product we prove a trace formula for even powers of $\Phi$ for its index, Proposition {\ref{lem:wandering}}.
 
 Then we turn to Coined Quantum Walks and analyse the properties of the flux operator as a function of the local coin matrices for $P$ being a full bulk state on a half space with boundary. We show how to engineer currents along  leads on an interface by choosing coin matrices, Theorem \ref{thm:quantumleads}. These currents imply existence of absolutely continuous spectrum stable under  perturbations at all quasienergies.

The last Section is devoted to the Chalker-Coddington model defined on the plane. We show that under very general conditions there exists a non trivial flux of particles through a curve. Under mild additional assumptions it implies  gapless absolutely continuous spectrum, see Theorem \ref{thm:main} and remark \ref{rem:cap}. 

\section{General results}
In \cite{ASS} the  relative index of  two projections was defined and shown to be zero if and only if there exists a unitary operator interchanging the projections. A trace formula in odd powers of their difference was shown for the index. In this section we shall prove spectral properties of $U$ which are implied by the non vanishing relative index of $U^\ast P U$ and $P$. We also observe that a trace formula in even powers of $\Phi$ holds for the relative index. This is a quite straightforward implication of the work of \cite{ASS} and quite useful in our application in chapter \ref{section:cc}. The occurence of a wandering subspace for $U$ was observed by \cite{cetal}; item (2) of the following Theorem is their Proposition 7.1. We proved item (3) for a special case in \cite{ABJ4}.
\begin{thm}\label{thm:general}
Let $U$ be a unitary operator on a Hilbert space and $P$ an orthogonal projection. For the selfadjoint operator
\[\Phi:=U^\ast P U-P=U^\ast\left[P,U\right] \] 
suppose that  $1$ is an isolated eigenvalue of finite multiplicity of  $\Phi^2$ and define the integer
\[\ind(\Phi)
:= \dim\ker\left(\Phi-\bI\right)-\dim\ker\left(\Phi+\bI\right).\]
If the index does not vanish,  $\ind(\Phi)=n\neq0$, then:

there exists a unitary $\widehat{U}$ such that   $\widehat{U}=S\oplus\widetilde{U}$ where $S$ is a bilateral shift of multiplicity $\vert n\vert$, $\widetilde{U}$ is unitary on its subspace , $\lbrack\widetilde{U},P\rbrack=0$ and :
\begin{enumerate}
\item $\Vert U-\widehat{U}-F\Vert=\cO(\Vert\Phi_<\Vert)$ for a finite rank operator $F$  and $\Phi_<$ the restriction of $\Phi$ to its spectral subspace off $\pm1$ : $\Phi_<:=\Phi\chi(\Phi^2<1)$;
\item
if  $\lbrack P,U\rbrack$ is compact then $U-\widehat{U}$ is compact   and  the essential spectrum of $U$ is the whole unit circle :
\[\sigma(U)=S^1;\]
\item if  $\lbrack P,U\rbrack$ is trace class then $U-\widehat{U}$ is trace class   and the absolutely continuous spectrum of $U$ is the whole unit circle:
\[\sigma_{ac}(U)=S^1.\]
\end{enumerate}\end{thm}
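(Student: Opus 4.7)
The statement divides naturally into first establishing (1) and then deducing items (2) and (3) from the standard stability of essential, respectively absolutely continuous, spectra under compact, respectively trace class, perturbations of unitary operators (Weyl's theorem and the invariance of $\sigma_{ac}$ under trace class perturbations). I would begin with the spectral decomposition of $\Phi$: by the hypothesis that $1$ is an isolated eigenvalue of $\Phi^2$ of finite multiplicity, write $\hil = \mathcal{K}_+ \oplus \mathcal{K}_- \oplus \hil_<$ with $\mathcal{K}_\pm := \ker(\Phi \mp \bI)$ finite dimensional and $\hil_< := \chi(\Phi^2<1)\hil$, so that $\Phi_<$ is bounded away from $\pm 1$. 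A direct computation identifies $\mathcal{K}_+ = \{\psi : P\psi = 0,\ PU\psi = U\psi\}$ as the pure in-flux directions and $\mathcal{K}_- = \{\psi : P\psi = \psi,\ PU\psi = 0\}$ as the pure out-flux directions, so that $n = \dim \mathcal{K}_+ - \dim \mathcal{K}_-$ is precisely the net flux through $\ran P$ at the boundary eigenvalues of $\Phi$. Without loss of generality assume $n > 0$ (else pass to $U^{-1}$ or $\bI - P$).

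The heart of the proof is the construction of an $n$-dimensional wandering subspace for a unitary $\widehat U$ close to $U$, in the spirit of \cite{cetal}. I would first pair the $n_- = \dim \mathcal{K}_-$ out-flux directions with $n_-$ carefully chosen in-flux directions of $\mathcal{K}_+$ so as to close them into $n_-$ finite $\widehat U$-invariant subspaces intertwining with $P$, which contribute to $\widetilde U$. The remaining excess $\mathcal{W} \subset \mathcal{K}_+$ of dimension $n$ furnishes the wandering basis: given an orthonormal basis $\{\psi_j\}_{j=1}^n$ of $\mathcal{W}$, propagate in both time directions by $e_0^{(j)} := \psi_j$ and $e_{k+1}^{(j)} := \widehat U e_k^{(j)}$, where $\widehat U$ differs from $U$ only by rotations supported on $\hil_<$ that make each new vector orthogonal to the previously built shift vectors and lie alternately in $\ran P$ or $\ker P$. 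Since $\psi_j \in \mathcal{K}_+$ gives $U\psi_j \in \ran P$ automatically (and $U^*\psi_j \in \ker P$ by the analogous identity for $U^{-1}$), the first steps in either direction need no correction; subsequent per-step corrections have operator norm $\cO(\|\Phi_<\|)$, since the obstruction to $Ue_k^{(j)}$ sitting cleanly orthogonal to the previously built shift subspace and in the correct $P$-side is controlled by the $\hil_<$-component of $e_k^{(j)}$.

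Setting $\cS := \overline{\mathrm{span}}\{e_k^{(j)} : k \in \bZ,\ 1 \le j \le n\}$ and $\widetilde U := \widehat U|_{\cS^\perp}$, the construction yields $\widehat U = S \oplus \widetilde U$ with $S$ a bilateral shift of multiplicity $n$; $\cS$ is $P$-reducing by the alternating placement, so $[\widetilde U, P] = 0$ on $\cS^\perp$. The finite rank operator $F$ in (1) collects the closures of the paired cycles, with $\|U - \widehat U - F\| = \cO(\|\Phi_<\|)$ coming from the per-step bounds. If $[P, U]$ is compact (resp.\ trace class) then so is $\Phi = U^*[P, U]$, hence $\Phi_<$; combined with summability of the per-step corrections along the shift orbit this forces $U - \widehat U$ to lie in the same ideal. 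Weyl's theorem then gives $\sigma_{\mathrm{ess}}(U) = \sigma_{\mathrm{ess}}(\widehat U) \supset \sigma(S) = \bS^1$, and the Kato--Rosenblum type invariance of the absolutely continuous spectrum under trace class perturbations gives $\sigma_{ac}(U) = \sigma_{ac}(\widehat U) \supset \bS^1$, the reverse inclusions being automatic for unitary $U$. The principal technical obstacle is the inductive construction itself: simultaneously enforcing bi-infinite orthogonality, arranging the $\ran P / \ker P$ alternation that makes $\cS$ a $P$-reducing subspace, and obtaining summability of the corrections in the compact or trace class norm so that the total perturbation, and not merely its operator norm, lies in the correct operator ideal.
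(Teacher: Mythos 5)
Your overall architecture coincides with the paper's: build an $\vert n\vert$-dimensional wandering subspace for a unitary $\widehat U$ close to $U$ out of the excess dimension of $\ker(\Phi\mp\bI)$, absorb the matched in/out-flux directions into a finite rank modification, and conclude by Weyl and Kato--Rosenblum. But the construction of $\widehat U$ --- which is the entire content of the proof --- is wrong as described. First, the orbit of a wandering vector cannot ``lie alternately in $\ran P$ or $\ker P$'': for $e=Pe$ one has $\Vert P^\perp UPe\Vert^2=-\langle e,\Phi e\rangle$, so for $e\in\hil_<\cap\ran P$ the vector $Ue$ lies within $\cO(\Vert\Phi_<\Vert^{1/2})$ of $\ran P$, and forcing $\widehat Ue$ into $\ran P^\perp$ would require a correction of norm at least $(1-\Vert\Phi_<\Vert)^{1/2}$, i.e.\ $\cO(1)$ rather than $\cO(\Vert\Phi_<\Vert)$; with such corrections $U-\widehat U$ could never be compact. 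The correct geometry is a single crossing: the orbit stays on one side of $P$ for all negative times, crosses once, and stays on the other side forever. Second, your claim that $U^\ast\psi_j\in\ker P$ for $\psi_j\in\mathcal K_+$ is unjustified (the ``analogous identity for $U^{-1}$'' concerns $U\mathcal K_+$, not $\mathcal K_+$), so even the first backward step needs correcting; and the paired directions do not in general ``close into finite $\widehat U$-invariant subspaces''. Most importantly, you yourself label the consistency and ideal-norm summability of the infinitely many per-step corrections ``the principal technical obstacle'' and leave it unresolved --- but that obstacle is precisely the theorem.

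The paper avoids the induction entirely with one global modification. Writing $R=U^\ast PU$, the $P$-diagonal part of $U$ is $PUP+P^\perp UP^\perp=UW$ with $W=R^\perp P^\perp+RP$ and $W^\ast W=\bI-\Phi^2$; the polar decomposition $W=U_0(\bI-\Phi^2)^{1/2}$ on $\ran Q_0$, together with a finite-dimensional unitary $V$ pairing $\ker(\Phi-\bI)$ with a complement of $\bL$ inside $\ker(\Phi+\bI)$, defines $\widehat U$ so that $P\widehat UP^\perp=0$ and $\ker\bigl(P\widehat UP\restriction\ran P\bigr)=\bL$. This triangularity makes the wandering property of $\bL$ and the $P$-invariance of $\bM=\bigoplus_{k\in\bZ}\widehat U^k\bL$ immediate, while $U^\ast\widehat U-\bI=F+(U_0-\bI)Q_0$ with $(U_0-\bI)(\bI-\Phi^2)^{1/2}=-\Phi^2-[P,\Phi]+\Phi^2(\bI+(\bI-\Phi^2)^{1/2})^{-1}$ yields the $\cO(\Vert\Phi_<\Vert)$ bound and the compact/trace class membership in a single step. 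Without a mechanism of this kind (or a genuine proof that your greedy corrections assemble into one unitary lying in the right ideal), the proposal identifies the target but does not prove it.
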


For the proof and for our applications we shall use the following facts and concepts. With the notation $P^\perp:=1-P$ :

\begin{proposition}{\label{lem:wandering}}
\begin{enumerate}
\item \[\Phi^2\le1; \quad\lbrack\Phi^2,P\rbrack=0;\]
\[ \ker\left(\Phi+\bI\right)=\ker\left(PUP\restriction\ran{P} \right); \quad\ker\left(\Phi-\bI \right)=\ker\left( P^\perp U P^\perp\restriction\ran{P^\perp}\right).\]
\item \[ind(\Phi)=trace\left(\Phi^{2j+1} \right)\]
for any odd integer $2j+1$ for which $\Phi^{2j+1}$ is trace class.
\item If $\ind(\Phi)$ is defined then  \[\ind(\Phi)= \dim\ker\left((\Phi^2-\bI)\restriction\ran{P^\perp}\right)-\dim\ker\left((\Phi^2-\bI)\restriction\ran{P}\right).\]
Furthermore:
 \[ind(\Phi)=trace\left((P^\perp-P)\Phi^{2j}\right)\]
for any even integer $2j$ for which $\Phi^{2j}$ is trace class; in particular if $\Phi^2$ is trace class then
\[ind(\Phi)=trace\left(P^\perp U^\ast P U P^\perp-PU^\ast P^\perp U P\right).\]
\item If $\lbrack0,1\rbrack\ni t\to U(t)$ is norm continuous and unitary and for $\Phi(t)=U^\ast(t)PU(t)$: $1\notin\sigma_{ess}(\Phi(t)^2)$ then $\bZ\ni\ind(\Phi(t))=const.$ If $\Phi(t_0)^j$ is trace class for a $j\in\bN$ then $\ind(\phi(t))$ can be calculated by a trace as in item (2) or (3).

\item  A $d$-dimensional subspace $\bL$ is called wandering  for a unitary $U$, if ${{U}}^k\bL\perp\bL\quad \forall k\in\bN$. 
For an orthogonal decomposition $\bL=\bigoplus_{j=1}^{d}\bL_j$ into 1-dimensional subspaces  and for the $U$-invariant subspace
\[\bM:=\bigoplus_{k\in\bZ} {{U}}^k\bL=\bigoplus_{j=1}^{d} \bigoplus_{k\in\bZ}{{U}}^k\bL_j\]
it holds that $S:={U}\restriction\bM$ is a bilateral shift of multiplicity  $d$   and $U\restriction\bM^\perp$ is unitary on $\bM^\perp$. In particular $\sigma_{ac}(U)=\bS^1$.
\end{enumerate}
\end{proposition}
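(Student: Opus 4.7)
For item (1), set $Q := U^\ast P U$, so $\Phi = Q - P$ is the difference of two orthogonal projections. The identity $(\bI-P-Q)^2 = \bI - (P-Q)^2$, obtained by expanding both sides with $P^2=P$, $Q^2=Q$, yields $\Phi^2 \le \bI$; the symmetric form $\Phi^2 = P + Q - PQ - QP$ then gives $P\Phi^2 = P - PQP = \Phi^2 P$. The kernel formulas follow by writing $\Phi + \bI = Q + P^\perp$ as a sum of nonnegative operators: $\psi\in\ker(\Phi+\bI)$ iff $Q\psi=0$ and $\psi\in\ran P$, which is exactly $\{\psi\in\ran P : PU\psi=0\}=\ker(PUP\restriction\ran P)$; the case $\ker(\Phi-\bI)$ is symmetric, using $\bI-\Phi = P + U^\ast P^\perp U$. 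Item (2) is the trace formula of \cite{ASS} for the relative index of the pair of projections $P$ and $Q$, applied verbatim.

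For item (3), the kernel identity is immediate from item (1): since $\lbrack\Phi^2,P\rbrack=0$, the decomposition $\ker(\Phi^2-\bI)=\ker(\Phi-\bI)\oplus\ker(\Phi+\bI)$ has its first summand contained in $\ran P^\perp$ and its second in $\ran P$. For the trace formula, set $A := PQP\restriction\ran P$ and $B := P^\perp Q P^\perp\restriction\ran P^\perp$; from $P\Phi^2 = P - PQP$ and $P^\perp\Phi^2 = P^\perp Q P^\perp$, in the block decomposition $\ran P^\perp \oplus \ran P$ the operator $(P^\perp - P)\Phi^{2j}$ has blocks $B^j$ and $-(\bI-A)^j$. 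The crucial input is the spectral pairing $\sigma(A)\cap(0,1) = 1 - \sigma(B)\cap(0,1)$, with multiplicities, a consequence of the polar decomposition of $QP$ and $P^\perp Q$ from \cite{ASS}. This pairing makes the contributions from the spectra in $(0,1)$ cancel in $\tr(B^j) - \tr((\bI-A)^j)$, so only the eigenvalue $1$ survives, contributing $\dim\ker(B-\bI) - \dim\ker A = \dim\ker(\Phi-\bI) - \dim\ker(\Phi+\bI) = \ind(\Phi)$. Specializing to $j=1$ and expanding $\Phi^2$ recovers the last displayed formula.

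For item (4), the assumption $1 \notin \sigma_{ess}(\Phi(t)^2)$ isolates $\pm 1$ uniformly (locally in $t$) from the rest of $\sigma(\Phi(t))$, so the Riesz projection $E(t)$ onto $\ker(\Phi(t)^2-\bI)$ has finite rank and varies norm-continuously in $t$. Then $\Phi(t)E(t)$ is continuous in trace norm and $\ind(\Phi(t)) = \tr(\Phi(t)E(t))$ is a continuous $\bZ$-valued function on the connected interval $\lbrack 0,1\rbrack$, hence constant; when $\Phi(t_0)^j$ is trace class, items (2) and (3) then apply. For item (5), the condition $U^k\bL\perp\bL$ for $k\in\bN$ combined with the unitarity of $U$ gives $U^k\bL\perp U^\ell\bL$ for all $k\neq\ell$; splitting $\bL=\bigoplus_{j=1}^d \bL_j$ into lines and choosing unit vectors $e_j\in\bL_j$, the correspondence $U^k e_j\leftrightarrow \delta_k$ on the $j$-th copy intertwines $U\restriction\bM$ with the bilateral shift of multiplicity $d$ on $\bigoplus_{j=1}^d \ell^2(\bZ)$. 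Since $\bM$ is $U$-invariant, so is $\bM^\perp$, on which $U$ is thus unitary, and the Fourier transform identifies the bilateral shift with multiplication by $e^{i\theta}$ on $L^2(\bS^1)^d$, yielding $\sigma_{ac}(U)\supseteq\bS^1$.

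The principal subtlety is the cancellation in item (3): the vanishing of the generic spectral contributions in $\tr((P^\perp-P)\Phi^{2j})$ relies on the ASS pairing between $A=PQP\restriction\ran P$ and $\bI-B=\bI-P^\perp Q P^\perp\restriction\ran P^\perp$. Once this pairing lemma is imported from \cite{ASS}, the remaining items reduce to direct block computation, continuity of finite-rank Riesz projections, or the standard shift model.
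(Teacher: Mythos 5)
Your arguments for items (1), (2) and (5) are the standard ones and correct; the paper does not prove these, it cites \cite{ASS,SNF}. For item (3) --- the only part the paper proves in detail --- your route is essentially the paper's proof in block-diagonal clothing. The paper kills the contributions of $\lambda^2\in(0,1)$ to $\tr\bigl((P^\perp-P)\Phi^{2j}\bigr)$ by writing $P^\perp-P=\Phi+(R^\perp-P)$ with $R=U^\ast PU$ and invoking $\tr\bigl(Q_{\lambda^2}(\Phi+R^\perp-P)\bigr)=0$ from Theorem 4.2 of \cite{ASS}. Since $\Phi^2$ acts as $P^\perp RP^\perp$ on $\ran P^\perp$ and as $\bI-PRP$ on $\ran P$, that vanishing trace is exactly your multiplicity pairing between $\sigma(PRP\restriction\ran P)$ and $1-\sigma(P^\perp RP^\perp\restriction\ran P^\perp)$ on $(0,1)$: same key import from ASS, same bookkeeping. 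If you want the pairing self-contained rather than cited, note that $T:=P^\perp RP$ satisfies $(P^\perp RP^\perp)T=T(\bI-PRP)$ and $T^\ast T=PRP(\bI-PRP)$, so $T$ and $T^\ast$ intertwine the relevant eigenspaces injectively for eigenvalues in $(0,1)$; the bare polar decomposition alone leaves a $\mu\leftrightarrow 1-\mu$ ambiguity that this intertwining resolves.

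The one genuine flaw is in item (4): the projection $E(t)$ onto $\ker(\Phi(t)^2-\bI)$ is in general \emph{not} norm-continuous in $t$. An eigenvalue sitting at $1$ for $t=t_0$ can move into $(0,1)$ for $t\neq t_0$, so the rank of $E(t)$ can drop and $t\mapsto E(t)$ jumps; your claim of norm continuity is what the hypothesis does not give you. The standard repair is to take the Riesz projection $F(t)$ over a fixed contour enclosing a neighbourhood of $1$ in $\sigma(\Phi(t)^2)$, which \emph{is} locally norm-continuous with locally constant finite rank, and then observe that $\tr\bigl(\Phi(t)F(t)\bigr)=\tr\bigl(\Phi(t)E(t)\bigr)=\ind(\Phi(t))$ because the extra eigenvalues $\lambda^2\in(1-\delta,1)$ contribute $\tr\bigl(\Phi(t)Q_{\lambda^2}(t)\bigr)=0$ by the very same ASS cancellation you used in item (3). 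With that substitution your continuity argument closes; the paper itself sidesteps the issue by citing the stability of the index from \cite{ASS}.
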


\begin{rem}
\item The assertions $1,2,5$ are well known, see \cite{ASS, SNF}. We contribute  the supertrace formula for even powers of $\Phi$ for the index, i.e.  item  (3). This  is in fact a generalization of Kitaev's formula \cite{ki}, which says for  the case of $P$ a multiplication operator in $\ell^2(\bZ;\bC)$:
\[\ind(\Phi)=trace\left((P^\perp-P)\Phi^{2}\right)=\sum_{x\in P, y\in P^\perp}\vert U_{xy}\vert^2-\vert U_{yx}\vert^2.\]
We have formulated the result $(3)$ for $\Phi=U^\ast PU -P$, but,  as for the result $(2)$,  see \cite{ASS}, the proof works in the general case where $\Phi$  is the difference of  two projections $\Phi=R-P$.
\end{rem}
\begin{proof} of Proposition ({\ref{lem:wandering}}). Item(4) follows from items (2,3) and the stability of the index, c.f. \cite{ASS}. \\
Item (3): $\psi\in\ker(\Phi+\bI)\Longleftrightarrow(\psi\in\ker(\Phi^2-\bI)\hbox{ and }\psi=P\psi$);   $\psi\in\ker(\Phi-\bI)\Longleftrightarrow(\psi\in\ker(\Phi^2-\bI)\hbox{ and }\psi=P^\perp\psi)$ which implies the first assertion. 

Denote the projection $R:=U^\ast P U$ and $B:=R^\perp-P$. Then $\Phi=R-P$ and
\[\Phi+B=P^\perp-P.\]
Denote $Q_{\lambda^2}$ the spectral projection on $ker(\Phi^2-\lambda^2)$ whose dimension is finite, $\Phi^2$ being compact. From  \cite{ASS} we know  that $\Phi^2+B^2=\bI$ and $\{\Phi,B\}=0$ and by their Theorem 4.2 that: 
\[trace\left(Q_{\lambda^2} (\Phi+B)\right)=0\hbox{ if } 0<\lambda^2<1.\]
Assuming that $\Phi^{2j}$ is trace class and calculating the trace in the spectral decomposition of  $\Phi^2$ we have
\[trace\left((P^\perp-P)\Phi^{2j}\right)=\sum_{\lambda^2\in\sigma(\Phi^2),\lambda^2<1}\lambda^{2j}\underbrace{trace\left(Q_{\lambda^2} (\Phi+B)\right)}_0+trace\left(Q_{1} (\Phi+B)\right)\]
\[=trace\left(Q_{1} (\Phi+B)\right)=trace\left(Q_{1} \Phi\right)=\dim\ker\left(\Phi-\bI\right)-\dim\ker\left(\Phi+\bI\right).\]
\end{proof}
\begin{proof} of Theorem \ref{thm:general}
Following the idea of \cite{cetal} we show that there exists a  perturbation of $U$ which admits an $\vert n\vert$ dimensional wandering subspace $\bL$. We freely make use of Proposition (\ref{lem:wandering}). 

For the dimensions of the $\pm1$ eigenspaces of $\Phi$ we can suppose $n=n_+-n_- < 0$ (else consider $P^\perp$  in place of $P$ as\  $-\Phi=U^\ast P^\perp U-P^\perp$). Then
\[\dim\ker\left(\Phi+\bI\right)=\dim\ker\left(\Phi-\bI\right)+\vert n\vert= n_+ + \vert n\vert.\]
Choose an $\vert n\vert$  dimensional subspace
\[\bL\subset \ker\left(\Phi+\bI\right)=\ker\left(PUP\restriction\ran{P} \right)\]
and denote its $n_+$ dimensional complement  $\bL^\perp:=\ker\left(\Phi+\bI\right)\ominus\bL\subset\ran P$. The orthogonal projection $Q_-$ on $\ker\left(\Phi+\bI\right)$ then decomposes as $Q_-=Q_\bL+Q_{\bL^\perp}$.  $Q_+$ projects on $\ker\left(\Phi-\bI\right)\subset\ran P^\perp$  and $Q_0:=\bI-Q_+-Q_-$. Note that $Q_0=\chi\left(\Phi^2\in[0,1)\right)$, the spectral projection off $1$.

Choose a $n_+$ dimensional unitary $V$ 

\[V:\bL^\perp\to\ker\left(\Phi-\bI\right)=\ran Q_+=\ker\left(P^\perp UP^\perp\restriction\ran{P^\perp} \right).\]

Observe for the diagonal operator :
\[PUP+P^\perp U P^\perp=U-\lbrack P,\lbrack P,U\rbrack\rbrack=U ( \underbrace{\bI-\Phi^2-\lbrack P,\Phi\rbrack}_{=:W} ).\]
Denoting the projection $R:=U^\ast P U$, it holds : $\Phi=R-P$, $\bI-\Phi^2= P^\perp R^\perp+RP$,  $W=R^\perp P^\perp+RP$. It follows 
\[W^\ast W=P^\perp R^\perp P^\perp+PRP=\bI-\Phi^2.\]
$W$ commutes with $\Phi^2$ thus with $Q_0$ because $P$ does. Denote  $U_0:\ran Q_0\to\ran Q_0$ the unitary such that 
\[W=U_0 (\bI-\Phi^2)^{1/2} \hbox{ on  }\ran Q_0\]
and  $\widehat{U}$ the unitary on the entire Hilbert space :
\[\widehat{U}\left(\psi_\bL+\psi_-+\psi_+ +\psi_0\right):= U\left(\psi_\bL+V^\ast\psi_+ + V\psi_-+U_0\psi_0\right)\]
for $\psi_\bL+\psi_-+\psi_+ +\psi_0 \in \ran Q_\bL+\ran Q_\bL^\perp+\ran Q_+ +\ran Q_0$.
%

By construction we have (see also figure \ref{fig:widehatU}): 
\begin{eqnarray*}
&&\psi_\bL\in\ker PUP\Rightarrow \psi_\bL\in\ker P\widehat{U}P,\\
&&\psi_-\in\ker PUP\Rightarrow V\psi_-\in\ker P^\perp UP^\perp \Rightarrow\widehat{U}\psi_-=\widehat{U}P\psi_-=P\widehat{U}P\psi_-,\\
&&\psi_+\in\ker P^\perp UP^\perp\Rightarrow V^\ast\psi_+\in \ker PUP\Rightarrow\widehat{U}\psi_+=\widehat{U}P^\perp\psi_+=P^\perp\widehat{U}P^\perp\psi_+,\\
&&\widehat{U}Q_0=(PUP+P^\perp U P^\perp)\left(\bI-\Phi^2\right)^{-1/2}Q_0=(P\widehat{U}P+P^\perp\widehat{U}P^\perp)Q_0\\
&&\hbox{ as } \lbrack P,\Phi^2\rbrack=0.
\end{eqnarray*}

It follows  \[\ker\left( P \widehat{U} P\restriction\ran P\right)=\bL; \quad \ker\left( P^\perp \widehat{U} P^\perp\restriction\ran P^\perp\right)=0; \quad P\widehat{U} P^\perp=0.\]
This implies for all $k\in\bN$
\begin{figure}[hbt]
\centerline {
\includegraphics[width=9cm,height=4.5cm,keepaspectratio=false]{./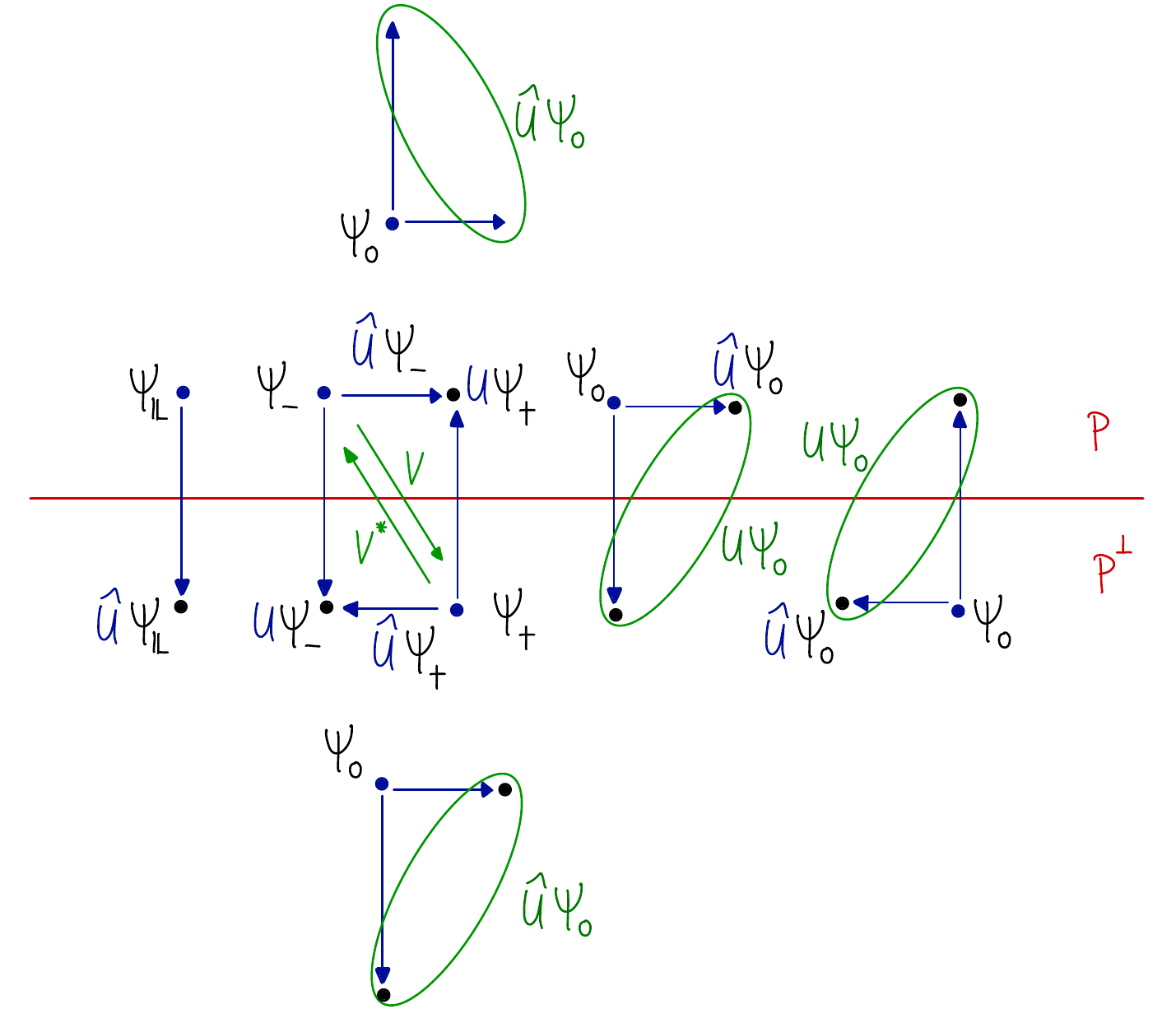}
}
\caption{The action of $\widehat{U}$}
\label{fig:widehatU}
\end{figure}

%

\[P\widehat{U}^k P^\perp=0,\quad P\widehat{U}^kQ_\bL=0,  \quad \hbox{ and in particular} \quad Q_\bL\widehat{U}^k Q_\bL=0,\]
thus  $\bL$ is a wandering subspace for $\widehat{U}$.  This implies  for the invariant subspace $\bM:=\bigoplus_\bZ {\widehat{U}}^k\bL$ that the restriction
$S:=\widehat{U}\restriction\bM$  is a bilateral shift of multiplicity  $\vert n\vert$ and that  $\widetilde{U}:=\widehat{U}\restriction\bM^\perp$ is unitary.

Remark that $\widehat{U}(\bI-Q_\bL)$ commutes  with $P$ and in particular $\lbrack Q_{\bM},P\rbrack=0$ which implies $\lbrack\widetilde{U},P\rbrack=0$.

%
%

Concerning the regularity remark that 
\[U^\ast \widehat{U}-\bI=\underbrace{(V^\ast-\bI)Q_+ +(V-\bI)Q_{\bL^\perp}}_{=:F}+(U_0-\bI)Q_0\] 

 where $F$ is of finite rank, and,  as operators on $\ran Q_0$ :
\[(U_0-\bI)(\bI-\Phi^2)^{1/2}=W-(\bI-\Phi^2)^{1/2}=-\Phi^2-\lbrack P,\Phi\rbrack +\Phi^2(\bI+(\bI-\Phi^2)^{1/2})^{-1}\]
which is of order $\cO(\Vert\Phi\chi(\Phi^2<1)\Vert)$ and compact (trace class)  if $\Phi$ is compact (trace class).

The results on the spectrum of $U$ then follow from invariance of the essential (resp. ac) spectrum under compact  (resp. trace class) perturbations.%
\end{proof}

\section{Engineering Quantum Walks with prescribed currents on bulk boundaries}

We consider coined quantum walks on $\bH=\ell^2\left(\bZ^d,\bC^{2d}\right)$, with scalar product $\langle\cdot,\cdot\rangle$. Denote the discrete unit sphere by 
\[I_{2d}:=S^{d-1}\cap\bZ^d\]
 and its directions by $\left(\pm j\right):=\pm e_j$, $j\in\{1,\ldots,d\}$, with  the canonical basis vectors $e_j\in\bZ^d$; thus $I_{2d}=\left\{\left(\pm j\right) \right\}$. A quantum direction is a basis vector $\Ket{\pm j}\in\bC^{2d}$ with $\Ket{+j}:=\widehat{e}_{2j-1}$, $\Ket{-j}:=\widehat{e}_{2j}$ for the canonical basis vectors $\widehat{e}_j\in\bC^{2d}$; we freely use the bijection from classical to quantum directions : $\left(\pm j\right)\leftrightarrow\Ket{\pm j}$.

For $\tau\in I_{2d}$ denote 
\[P_\tau:=\Ket\tau\Bra\tau\]
 the orthogonal projection on $span(\Ket{\tau})\subset\bC^{2d}$.
 
 By a multiplication operator  on $\bH$ with symbol $M$ we understand
\[\bbM\psi(x)=M(x)\psi(x), \quad \psi\in\bH, \quad M(x)\in\bB\left(\bC^{2d}\right).\]

 Denote $S_\tau$ the right shift $S_\tau\psi(x):=\psi(x-\tau)$. The conditional shift operator is 
\begin{equation}\label{def:cshift}
\cshift:\bH\to\bH, \quad \cshift:=\sum_{\tau\in I_{2d} } S_\tau \bbP_\tau,\end{equation}
where $\bbP_\tau$ has the symbol $P_\tau,\forall x$, i.e.: $\cshift\psi(x)=\sum_\tau \left\langle \tau, \psi(x-\tau)\right\rangle \Ket{\tau}$.

By a  simple  quantum walk $U: \bH\to\bH$ with  coin $\bbC$ we understand
\begin{equation}\label{def:quantumwalk}
U=\cshift\bbC \qquad \hbox{ with $\bbC$ a multiplication operator with unitary symbol }.
\end{equation}

Thus the corresponding one step unitary evolution $U$  of the walker is  so that
the coin matrices first reshuffle or update the coin states so that the pieces of the wave function corresponding to different internal states are then shifted to different directions, depending on the internal state : $U\psi(x)=\sum_\tau \left\langle \tau, C(x-\tau)\psi(x-\tau)\right\rangle \Ket{\tau}$.

We are working with a flux operator defined for an adapted projection, i.e.: a projection which  at each site projects onto  quantum directions:

\begin{definition}A projection valued multiplication operator $\bbP$ with symbol $P$ is called adapted if 
\[\left\lbrack P(x), P_\tau\right\rbrack=0,\quad x\in\bZ^d,\tau\in I_{2d}\]
The set of open directions at $x\in\bZ^d$ is defined as
\[I_\bbP(x):=\left\{\tau\in I_{2d}, \left\langle \tau, P(x)\tau\right\rangle\neq0\right\}\]
so $P(x)=\sum_{\tau\in I_\bbP(x)}\Ket{\tau}\Bra{\tau}$.

If $\bbP$ is an adapted projection then $\cshift^\ast\bbP\cshift$ is  also adapted  with symbol
\[\widehat{P}(x)=\sum_{\tau\in I_{2d}} P_\tau P(x+\tau)P_\tau;\]
An adapted $\bbP$ is called homogeneous in $G\subset\bZ^d$ if 
\[\dim\ran \widehat{P}(x)=\dim\ran P(x), \forall x\in G.\]
\end{definition}

\begin{example}\label{ex:homogeneous}
For $d=1$, $G=\lbrack1,\infty)\cap\bZ$ the only  two adapted projections homogeneous in $G$ and of $\dim\ran P(x)=1, x\in G$ are given by
\begin{enumerate}
\item $P_a(x):=\chi(x\in G)\Ket{1}\Bra{1}$
\item $P_b(x):=\chi(x\in G)\Ket{(-1)^x}\Bra{(-1)^x}$
\end{enumerate}
then for $x\in G$ : $\widehat{P}_a(x)=P_a(x)$, $\widehat{P}_b(x)=P_b^\perp(x)$
\end{example}

\begin{prop}\label{prop:phiqw}
Let $U$ be a quantum walk with coin $\bbC$, $\bbP$ an adapted projection, $\Phi=U^\ast \bbP U-\bbP$. It holds
\begin{enumerate}
\item the eigenvalue $1$ of $\Phi^2$ is isolated and finitely degenerate iff there exists $R>0$ such that  $\bbP$ is homogeneous in $\{x\in\bZ^d, \vert x\vert >R\}$ and it holds  in ${\bC^{2d}}$ :
\[\sup_{\vert x\vert>R}\Vert C^\ast(x)\widehat{P}(x)C(x)-P(x)\Vert\le c<1.\]
 In this case
\begin{equation}\label{eq:index}\ind(\Phi)=\sum_{\vert x\vert \le R}\left(\dim\ran \widehat{P}(x)-\dim\ran P(x)\right);\end{equation}
\item $\lbrack U,\bbP\rbrack$ is compact iff $\lim_{\vert x\vert \to\infty}\Vert C^\ast(x)\widehat{P}(x)C(x)-P(x)\Vert=0$;
\item $\sum_{x\in\bZ^d}\Vert C^\ast(x)\widehat{P}(x)C(x)-P(x)\Vert<\infty$ implies that $\lbrack U,\bbP\rbrack$ is trace class.
\end{enumerate}
\end{prop}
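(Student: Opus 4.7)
The key observation driving all three statements is that, because $\bbP$ is adapted, the operator $\Phi$ is itself a multiplication operator on $\bH$. Writing $\cshift=\sum_\tau S_\tau\bbP_\tau$ and exploiting the identity $P_\tau P(y)P_{\tau'}=\delta_{\tau\tau'}\langle\tau,P(y)\tau\rangle P_\tau$ (which is exactly where adaptedness enters, since $P(y)$ is diagonal in the basis $\{\Ket\sigma\}$), one computes that $\cshift^\ast\bbP\cshift$ is multiplication by the symbol $\widehat P(x)$, whence
\[
\Phi\;=\;\bbC^\ast\,\cshift^\ast\bbP\cshift\,\bbC-\bbP
\]
is multiplication by $\Phi(x):=C^\ast(x)\widehat P(x)C(x)-P(x)$. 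At each site this is a difference of two orthogonal projections on $\bC^{2d}$, so $\sigma(\Phi(x))\subset[-1,1]$ and $\ind(\Phi(x))=\tr\Phi(x)=\dim\ran\widehat P(x)-\dim\ran P(x)$.

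For item (1), the fibrewise decomposition $\Phi^2=\bigoplus_{x\in\bZ^d}\Phi(x)^2$ gives $\sigma(\Phi^2)=\overline{\bigcup_x\sigma(\Phi(x)^2)}$. Under the stated hypothesis, $\sigma(\Phi(x)^2)\subset[0,c^2]$ for $|x|>R$, so $1$ is separated from the rest of $\sigma(\Phi^2)$ and its eigenspace lies in the finite-dimensional sum $\bigoplus_{|x|\le R}\bC^{2d}$. Conversely, if $1$ is isolated in $\sigma(\Phi^2)$ with finite multiplicity, pick $\varepsilon>0$ with $(1-\varepsilon,1)\cap\sigma(\Phi^2)=\emptyset$ and choose $R$ large enough to contain every site where $\pm1\in\sigma(\Phi(x))$; then $\|\Phi(x)\|\le\sqrt{1-\varepsilon}=:c<1$ for $|x|>R$. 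Homogeneity on $\{|x|>R\}$ is automatic from this bound, since two orthogonal projections in $\bC^{2d}$ at norm distance less than $1$ share the same rank. Formula (\ref{eq:index}) then follows from the additivity $\ind(\Phi)=\sum_x\ind(\Phi(x))$ together with the vanishing $\ind(\Phi(x))=0$ for $|x|>R$.

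For items (2) and (3), the identity $\Phi=U^\ast[P,U]$ and the unitarity of $U$ give $[U,\bbP]=-U\Phi$, so $[U,\bbP]$ is compact (respectively trace class) if and only if $\Phi$ is. A multiplication operator on $\ell^2(\bZ^d,\bC^{2d})$ by uniformly bounded matrices $\Phi(x)$ is compact iff $\|\Phi(x)\|\to 0$ as $|x|\to\infty$, which is (2); it is trace class iff $\sum_x\|\Phi(x)\|_1<\infty$, and since each $\Phi(x)$ acts on the $2d$-dimensional fibre the elementary bound $\|\Phi(x)\|_1\le 2d\,\|\Phi(x)\|$ yields the sufficient condition in (3).

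The only step with genuine content is the converse in (1), where one must extract the quantitative uniform bound $c<1$ from the purely qualitative spectral hypothesis that $1$ is isolated and finitely degenerate in $\sigma(\Phi^2)$; everything else is routine bookkeeping on the direct sum decomposition of $\Phi$.
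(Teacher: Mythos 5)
Your proof is correct and follows essentially the same route as the paper: identifying $\Phi$ as a block multiplication operator with symbol $C^\ast(x)\widehat{P}(x)C(x)-P(x)$, decomposing spectrum and eigenspaces fibrewise, and invoking the standard compactness and trace-class criteria for such operators. You in fact supply slightly more detail than the paper on the converse direction of item (1) (extracting the uniform bound $c<1$ and deducing homogeneity from the rank-equality of projections at norm distance below $1$), which the paper leaves implicit.
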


\begin{proof}
$1.$ $\Phi$ is an operator valued multiplication operator with self-adjoint symbol
\[\Phi(x)=C^\ast(x)\widehat{P}(x)C(x)-P(x),\]
for the spectrum it holds $\sigma(\Phi)=\bigcup_{x\in\bZ^2}\sigma(\Phi(x))$. Also the eigenspace $\ker (\Phi\pm\bI)$ is the direct sum of $\ker (\Phi(x)\pm\bI)$ which is trivial for $x$ with $\vert x\vert>R$ as $\Vert \Phi(x)\Vert<1$ there. Thus  $\dim\ker(\Phi\pm\bI)=\sum_{\vert x\vert \le R}\dim\ker (\Phi(x)\pm\bI)$ which implies the formula for the index.

$2.$ Compactness of the multiplication operator $\Phi$ is equivalent to $\Vert \Phi(x)\Vert\to 0$.

$3.$ Follows from lemma \ref{lem:traceclass}
\end{proof}

\begin{lem}\label{lem:traceclass} Let $\bbM$ be a multiplication operator. If  its symbol satisfies
\[\sum_{x\in\bZ^d}\Vert M(x)\Vert <\infty\] 
then $\bbM$ is trace class.
\end{lem}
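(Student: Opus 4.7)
The plan is to exploit the fact that $\bbM$ acts fibrewise on the finite-dimensional spaces $\bC^{2d}$ at each site, so that it decomposes as a direct sum whose blocks are automatically trace class.

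First, under the natural identification $\bH = \ell^2(\bZ^d,\bC^{2d}) \cong \bigoplus_{x\in\bZ^d}\bC^{2d}$, the operator $\bbM$ preserves each fibre and acts on the fibre at $x$ by $M(x)$. Thus $\bbM = \bigoplus_{x\in\bZ^d} M(x)$. It is a general fact about trace ideals that for a direct sum of operators the trace norm is additive, so $\|\bbM\|_1 = \sum_{x\in\bZ^d} \|M(x)\|_1$, with the operator trace class iff this sum is finite.

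Second, I would use the elementary estimate $\|A\|_1 \le (\dim \hil)\,\|A\|$ valid for any bounded operator $A$ on a finite-dimensional Hilbert space $\hil$. This follows from the singular value decomposition: writing $A = \sum_{k=1}^{n} s_k\ket{u_k}\bra{v_k}$ with $n=\dim\hil$ and $s_k\le\|A\|$ for all $k$, we get $\|A\|_1 = \sum_k s_k \le n\|A\|$. Applied to each fibre, $\|M(x)\|_1 \le 2d\,\|M(x)\|$.

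Combining the two observations,
\[
\|\bbM\|_1 \;=\; \sum_{x\in\bZ^d}\|M(x)\|_1 \;\le\; 2d \sum_{x\in\bZ^d}\|M(x)\| \;<\;\infty,
\]
so $\bbM$ is trace class. There is no real obstacle here; the only point to verify carefully is the direct-sum decomposition (i.e.\ that $\bbM$ is block-diagonal with respect to the site decomposition), which is immediate from the definition $\bbM\psi(x)=M(x)\psi(x)$.
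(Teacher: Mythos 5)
Your proof is correct and follows essentially the same route as the paper's: decompose $\bbM$ as a block-diagonal (direct sum) operator over the sites, note that the trace norm is then the sum of the fibre trace norms, and bound each $\Vert M(x)\Vert_1$ by a dimensional constant times $\Vert M(x)\Vert$ using equivalence of norms on the finite-dimensional fibre. The only difference is that you make the constant $2d$ explicit via the singular value decomposition, where the paper simply invokes ``equivalence of norms.''
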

\begin{proof}  $\bbM$ is trace class iff $\sum_x\sum_\mu \lambda_\mu(x)=\sum_x\Vert M(x)\Vert_{tr}<\infty$ for  the singular values of $M(x)$. The result follows from the equivalence of norms.

\end{proof}

As a corollary from Theorem \ref{thm:general} and Proposition \ref{prop:phiqw} we have
\begin{cor}\label{cor:perturbation}Let $\bbP$ be a projection homogeneous in $\{x\in\bZ^d, \vert x\vert >R\}$ such that
\[\sum_{x\in\bZ^d}\dim\ran \widehat{P}(x)-\dim\ran P(x)\neq0\]
let  $\bbC$ be a coin and $U=\cshift\bbC$ the associated quantum walk. Then
\begin{enumerate}
\item $C^\ast(x)\widehat{P}(x)C(x)-P(x)\to_{\vert x\vert\to\infty}0$ implies : $\sigma(U)=S^1$;
\item \label{above} $\sum_x\Vert C^\ast(x)\widehat{P}(x)C(x)-P(x)\Vert<\infty$ implies : $\sigma_{ac}(U)=S^1$;
\item for $C(x)$ as in \ref{above}. and a second coin $\bbC_2$ such that $\sum_x\Vert C(x)-C_2(x)\Vert <\infty$ it holds
\[\sigma_{ac}(\cshift\bbC_2)=S^1.\]
\end{enumerate}
\end{cor}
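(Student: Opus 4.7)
The plan is to verify the hypotheses of Theorem \ref{thm:general} through Proposition \ref{prop:phiqw} and then read off conclusions (1) and (2) of the corollary directly from conclusions (2) and (3) of that theorem; part (3) will follow from (2) by stability of the absolutely continuous spectrum under trace class perturbations.

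The key preliminary step, common to parts (1) and (2), is to show that $\ind(\Phi)\ne 0$. Under either hypothesis one has $\Vert C^\ast(x)\widehat{P}(x)C(x)-P(x)\Vert\to 0$ as $\vert x\vert\to\infty$, so one can enlarge $R$ to some $R'\ge R$ with $\sup_{\vert x\vert>R'}\Vert C^\ast(x)\widehat{P}(x)C(x)-P(x)\Vert\le c<1$. Since $\bbP$ remains homogeneous on $\{\vert x\vert>R'\}$, Proposition \ref{prop:phiqw}(1) applies and yields \eqref{eq:index} with $R$ replaced by $R'$; on the annulus $R<\vert x\vert\le R'$ each summand vanishes by homogeneity, so the index agrees with $\sum_{x\in\bZ^d}\big(\dim\ran\widehat{P}(x)-\dim\ran P(x)\big)$, which is nonzero by assumption.

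For part (1), the condition $C^\ast(x)\widehat{P}(x)C(x)-P(x)\to 0$ gives compactness of $[U,\bbP]$ by Proposition \ref{prop:phiqw}(2); combined with $\ind(\Phi)\ne 0$ and the already-verified isolation of the eigenvalue $1$ of $\Phi^2$, Theorem \ref{thm:general}(2) delivers $\sigma(U)=S^1$. For part (2), summability upgrades $[U,\bbP]$ to trace class by Proposition \ref{prop:phiqw}(3), and Theorem \ref{thm:general}(3) gives $\sigma_{ac}(U)=S^1$. For part (3), writing $\cshift\bbC-\cshift\bbC_2=\cshift(\bbC-\bbC_2)$ and applying Lemma \ref{lem:traceclass} to the multiplication operator $\bbC-\bbC_2$ shows that $U-\cshift\bbC_2$ is trace class; by the Birman--Krein invariance principle for the absolutely continuous spectrum of unitary operators, $\sigma_{ac}(\cshift\bbC_2)=\sigma_{ac}(U)=S^1$.

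The only nontrivial point is the bookkeeping in the first step: matching the hypothesis that the \emph{global} sum of rank differences is nonzero with the index formula \eqref{eq:index}, which sums only over $\vert x\vert\le R'$. Enlarging $R$ to accommodate the supremum bound is harmless precisely because the homogeneity assumption forces the intermediate terms to vanish, and no other delicate estimate is required beyond invoking the previously established results.
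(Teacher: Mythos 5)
Your proposal is correct and follows essentially the same route as the paper: parts (1) and (2) are read off from Theorem \ref{thm:general} via Proposition \ref{prop:phiqw} and Lemma \ref{lem:traceclass}, and part (3) uses the factorization $\cshift\bbC-\cshift\bbC_2=\cshift(\bbC-\bbC_2)$ together with the Birman--Krein theorem. Your explicit bookkeeping showing that the nonvanishing of the global sum of rank differences matches the index formula (by enlarging $R$ and using homogeneity to kill the intermediate terms) is a welcome filling-in of a detail the paper leaves implicit, but it is not a different argument.
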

\begin{proof}1. and 2. follow from Theorem \ref{thm:general},  proposition \ref{prop:phiqw} and Lemma \ref{lem:traceclass}. For the third assertion remark that 
$\cshift\bbC_2-\cshift\bbC_1=\cshift(\bbC_2-\bbC_1)$ which is trace class, and the claim follows from the Birman-Krein theorem.
\end{proof}

Our basic example, see figure (\ref{fig:basicexample}) and example (\ref{ex:homogeneous}), in one dimension is
\begin{example}(Basic example) For dimension $d=1$ let $\bbC$ be a coin such that $\left\lbrack C(x),P_{(1)}\right\rbrack=0$ $\forall x\ge N\ge0$ then $\sigma_{ac}(\cshift\bbC)=S^1$.
\end{example}
\begin{proof}Let $\bbP$ be an adapted projection such that $P(x)=0$ for $x$ less or equal to $-N$ and $P(x)=P_{(1)}$ for $x$ greater or equal to $N$.Then for all directions $\tau$ : $P(x+\tau)=P(x), \forall \vert x\vert > N$ and $C^\ast(x)\widehat{P}(x)C(x)-P(x)=\Phi(x)=0, \forall \vert x\vert >N$. So $\Phi$ is trace class and 
\[trace(\Phi)=\sum_{\vert x\vert \le N}\sum_\tau\left(trace(P_\tau P(x+\tau) P_\tau)-trace(P_\tau P(x) P_\tau\right)\]
\[=trace(P_{(1)}(P(N+1)-P(-N)))+trace(P_{(-1)}(P(-N-1)-P(N)))=trace(P_{(1)})=1\]
and the claim follows from Theorem (\ref{thm:general}).
\end{proof}

\begin{figure}[hbt]
\centerline {
\includegraphics[width=8cm]{./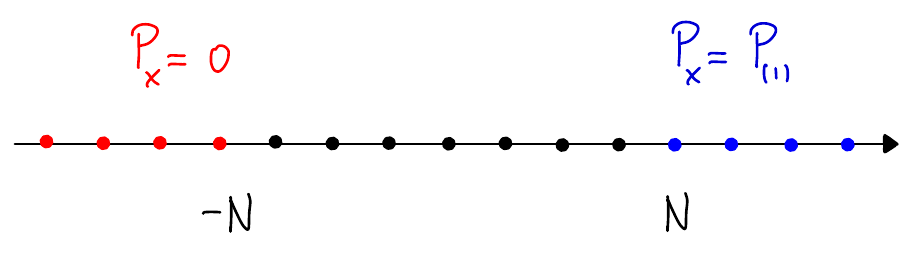}
}
\caption{Basic example}
\label{fig:basicexample}
\end{figure}

Remark that the above quantum walk $\cshift\bbC\restriction{\chi(\lbrack N,\infty))P_1}$ is a unilateral shift and the coin is totally arbitrary on the left of $N$. In particular other spectral type may coexist with the absolutely continuous spectum on $S^1$.

\subsection{Networks of leads in $\bZ^d$}

The basic example above is effortlessly transported to  a halfline in $\bZ^d$.  

\begin{example}(Halfline in $\bZ^d$) Denote $\bN_{(1)}:=\bN\times\{0\}\times\cdots\times\{0\}\subset\bZ^d$. Let $\bbC$ be a coin and $U=\cshift\bbC$ the associated quantum walk.
\begin{enumerate}
\item If  $\lbrack C(x),P_{(1)}\rbrack=0, \forall x\in\bN_{(1)}$ then it holds with the symbol $P(x):=\chi(x\in\bN_{(1)})P_{(1)}$ :  $\ind(U^\ast \bbP U-\bbP)=1$.
\item If  $\lbrack C(x),P_{(-1)}\rbrack=0, \forall x\in\bN_{(1)}$ then it holds with the symbol $P(x):=\chi(x\in\bN_{(1)})P_{(-1)}$ :  $\ind(U^\ast \bbP U-\bbP)=-1$.
\end{enumerate}
\end{example}
\begin{proof} It is sufficient to calculate $\widehat{P}$ in a 1-neighborhood of $\bN_{(1)}$ because $\Phi(x)$ vanishes outside. Here we have, see figure (\ref{fig:1neigborhood})

\begin{figure}[htb]
\centerline {
\includegraphics[width=6cm]{./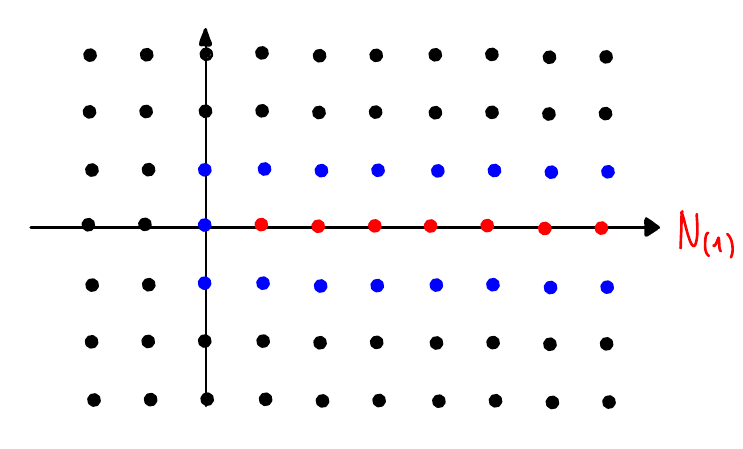}
}
\caption{1-neighborhood}
\label{fig:1neigborhood}
\end{figure}
1. 
\[P(x+(1))=
\left\{
\begin{array}{cl}
  P_{(1)}& x\in\bN_{(1)}\cup\{( 0,0,\ldots)\}     \\
  0&   \hbox{ else }
\end{array}
\right.
\]
\[P(x+(-1))=
\left\{
\begin{array}{cl}
  P_{(1)}& x\in\bN_{(1)}\setminus\{( 1,0,\ldots)\}     \\
  0&   \hbox{ else }
\end{array}
\right.
\]

\[P(x+(2))=
\left\{
\begin{array}{cl}
  P_{(1)}& x\in\bN_{(1)}+(-2)    \\
  0&   \hbox{ else }
\end{array}
\right.
\]

\[P(x+(-2))=
\left\{
\begin{array}{cl}
  P_{(1)}& x\in\bN_{(1)}+(2)    \\
  0&   \hbox{ else }
\end{array}
\right.
\]
it follows $\widehat{P}(x)=P_{(1)}\chi(x\in\bN_{(1)}\cup\{(0,0,\ldots)\})$ so  $\bbP$ is homogeneous in $\bZ^d\setminus\{(0,0,\ldots)\}$ and by formula (\ref{eq:index}) : $\ind(\Phi)=rank(P_1\chi(x=(0,0,\ldots))=1$. 

2. In the second case

$\widehat{P}(x)$ equals $P(x)$ on $\bN_{(1)}\setminus\{(1,0,\ldots)\}$ , $\bbP$ is homogeneous in $\bZ^d\setminus\{(1,0,\ldots)\}$ and \\$ind(\Phi)=-rank(P_{(-1)}\chi(x=(1,0,\ldots)))=-1$. 
\end{proof}

Generalising this example we consider outgoing and incoming leads :

\begin{definition}A path $\gamma: \bZ\ni G\to\bZ^d$ is called regular if $\vert\overrightarrow{\gamma(t-1)\gamma(t)}\vert=1, \forall t-1,t\in G$ and simple if it is injective. An unbounded regular path is called a (classical) lead. A lead is outgoing if $G=\bN=\{1,2,\ldots\}$ and a lead is incoming if $G=-\bN=\{\ldots,-2,-1\}$.

The tangent path of $\gamma$ is
\[\tau_\gamma: G\to I_{2d}, \quad \tau_\gamma(t):=\overrightarrow{\gamma(t-1)\gamma(t)}\]
with the convention : $\tau_\gamma(1):=\tau_\gamma(2)$ for an outgoing lead. An outgoing or incoming  lead is called admissible if it has no tangential selfintersections, i.e.
\[G\ni t\to (\gamma(t),\tau_\gamma(t)) \in\bZ^d\times I_{2d}\]
is injective.
\end{definition}

We use the following notations : for $x_0\in\bZ^d, \tau\in I_{2d}$ : $\Ket{x_0,\tau_0}\in\ell^2\left(\bZ^d,\bC^{2d}\right)$ is defined by $x\mapsto \delta_{x,x_0}\Ket{\tau_0}$ and 
\[\Ket{x_0,\tau_0}\Bra{x_0,\tau_0}\]
for the orthogonal projection on $span\{\Ket{x_0,\tau_0}\}$.

\begin{definition}Let $\gamma: G\to\bZ^d$ be a classical lead and $\tau_\gamma$ its tangent. The associated quantum lead is an adapted projection with symbol along $\gamma$, i.e.
\[ P_\gamma(t)=\Ket{\gamma(t),\tau_\gamma(t)}\Bra{\gamma(t),\tau_\gamma(t)},\quad t\in G.\]
\end{definition}

We construct quantum walks which implement the time evolution along a lead :

\begin{prop}\label{prop:onelead}Let  $P_\gamma$ be  an admissible outgoing (resp. incoming) quantum lead and let $\bbC$ be a coin such that
\begin{equation}\vert \left\langle \tau_\gamma(t+1),C(\gamma(t))\tau_\gamma(t)\right\rangle\vert=1\quad \forall t\in G\ s.t.\  t+1\in G.\label{cond:c}\end{equation}
Then $P_\gamma(t)P_\gamma(s)=0, \forall t\neq s \in G$ and for the quantum walk $U=\cshift\bbC$ it holds:
\[U^nP_\gamma(1){U^\ast}^n=P_\gamma(1+n) \quad \forall n\in\bN \hbox{ if } \gamma \hbox{ is outgoing }, \]
\[{U^\ast}^nP_\gamma(-1)U^n=P_\gamma(-1-n) \quad \forall n\in\bN \hbox{ if } \gamma \hbox{ is incoming }. \]
In particular $\ran P_\gamma(1)$ (resp. $\ran P_\gamma(-1)$) is a wandering subspace for $U$ (resp. $U^\ast$) if $\gamma$ is outgoing (resp. incoming), and for the total projector \[\bbP_\gamma:=\sum_{t\in G}P_\gamma(t) \hbox{ and the flux } \Phi=U^\ast \bbP_\gamma U -\bbP_\gamma,\quad G=\bN\ (\hbox{ resp. } G=-\bN)\]
it holds
\[\ind(\Phi)=
\left\{
\begin{array}{rl}
  1&   \hbox{ if } \gamma \hbox{ is outgoing }   \\
  -1&   \hbox{ if } \gamma \hbox{ is incoming } .  \\
\end{array}
\right.
\]
\end{prop}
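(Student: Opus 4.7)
The key observation is that everything reduces to computing the one-step action of $U = \cshift\bbC$ on a lead state. First I would apply $\bbC$ to $\Ket{\gamma(t),\tau_\gamma(t)}$ to obtain $\Ket{\gamma(t), C(\gamma(t))\tau_\gamma(t)}$; by the Cauchy--Schwarz saturation argument, condition (\ref{cond:c}) combined with unitarity of $C(\gamma(t))$ forces $C(\gamma(t))\tau_\gamma(t) = c_t\tau_\gamma(t+1)$ for some phase $c_t$ whenever $t+1\in G$. Applying $\cshift$, which sends $\Ket{x,\sigma}$ to $\Ket{x+\sigma,\sigma}$, and using $\gamma(t)+\tau_\gamma(t+1)=\gamma(t+1)$, then yields the fundamental identity
\[ U\Ket{\gamma(t),\tau_\gamma(t)} = c_t\Ket{\gamma(t+1),\tau_\gamma(t+1)}. \]

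\textbf{Orthogonality, iteration, wandering.} Admissibility says the map $t \mapsto (\gamma(t),\tau_\gamma(t))$ is injective, so $P_\gamma(t)P_\gamma(s) = 0$ for $t \neq s$. Passing to rank-one projections on both sides of the one-step identity gives $UP_\gamma(t)U^\ast = P_\gamma(t+1)$ whenever $t, t+1 \in G$, and a trivial induction produces $U^nP_\gamma(1)U^{\ast n} = P_\gamma(1+n)$ for outgoing leads (and the $U^\ast$-analogue for incoming). The wandering assertion is then immediate: $U^n\ran P_\gamma(1) = \ran P_\gamma(1+n)$ is orthogonal to $\ran P_\gamma(1)$ for every $n \in \bN$ by the previous orthogonality.

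\textbf{Index, and main obstacle.} For outgoing $G = \bN$, the relation $U^\ast P_\gamma(t)U = P_\gamma(t-1)$ holds for every $t \geq 2$ but leaves $t = 1$ uncompensated, so summing over $G$ telescopes to $U^\ast\bbP_\gamma U = \bbP_\gamma + U^\ast P_\gamma(1)U$. Hence $\Phi = U^\ast P_\gamma(1)U$ is a rank-one orthogonal projection, and Proposition \ref{lem:wandering}(2) with $j=0$ gives $\ind(\Phi) = \tr(\Phi) = 1$. The incoming case is dual: the analogous telescoping produces $\Phi = -P_\gamma(-1)$ and $\ind(\Phi) = -1$. The one delicate point --- and the main obstacle in the sense that it is the only place where something beyond bookkeeping is required --- is the convention $\tau_\gamma(1) := \tau_\gamma(2)$ at the outgoing boundary. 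Under it, condition (\ref{cond:c}) at $t = 1$ reads $|\langle\tau_\gamma(2), C(\gamma(1))\tau_\gamma(2)\rangle| = 1$, which forces $\tau_\gamma(2)$ to be a phase eigenvector of $C(\gamma(1))$; this is precisely what is needed to make the first step of the induction consistent with the generic step and to make the telescoping in the index computation work cleanly.
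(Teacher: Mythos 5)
Your proposal is correct and follows essentially the same route as the paper: establish $U\Ket{\gamma(t),\tau_\gamma(t)}=\varphi_t\Ket{\gamma(t+1),\tau_\gamma(t+1)}$ from condition (\ref{cond:c}) and unitarity, deduce orthogonality and the wandering property, and telescope to get $\Phi=U^\ast P_\gamma(1)U$ (outgoing) or $\Phi=-P_\gamma(-1)$ (incoming), hence $\ind(\Phi)=\pm1$. Your explicit remarks on the Cauchy--Schwarz saturation and on the boundary convention $\tau_\gamma(1):=\tau_\gamma(2)$ only make explicit what the paper's proof uses implicitly.
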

\begin{proof}By the admissibility assumptions $\Ket{\gamma(t_0),\tau_\gamma(t_0)}\perp\Ket{\gamma(t_1),\tau_\gamma(t_1)}, \forall t_0,t_1\in G$. It follows that $\bbP_\gamma$ is well defined as a strong limit of orthogonal projections.

For any $t\in {G}$ ( except $t=-1$ in the incoming case) it holds for a phase $\varphi_t\in S^1$:
\[U\Ket{\gamma(t),\tau_\gamma(t)}=\varphi_t \cshift\Ket{\gamma(t),\tau_\gamma(t+1)}=\varphi_t\Ket{\gamma(t+1),\tau_\gamma(t+1)}.\]
This proves the wandering subspace property and furthermore :
\[U \bbP_\gamma U^\ast=
\left\{
\begin{array}{ll}
\bbP_\gamma -\Ket{\gamma(1),\tau_\gamma(1)}\Bra{\gamma(1),\tau_\gamma(1)}   &  \hbox{ if }\gamma\hbox{ is outgoing} \\
\bbP_\gamma + U \Ket{\gamma(-1),\tau_\gamma(-1)}\Bra{\gamma(-1),\tau_\gamma(-1)} U^\ast  &  \hbox{ if }\gamma\hbox{ is outgoing.}    
\end{array}
\right.
\]
It follows for $\Phi=U^\ast \bbP_\gamma U-\bbP_\gamma$:

\begin{equation}\label{eq:phiinout}
\Phi=
\left\{
\begin{array}{ll}
U^\ast\Ket{\gamma(1),\tau_\gamma(1)}\Bra{\gamma(1),\tau_\gamma(1)}U   &  \hbox{ if }\gamma\hbox{ is outgoing} \\
-\Ket{\gamma(-1),\tau_\gamma(-1)}\Bra{\gamma(-1),\tau_\gamma(-1)}   &  \hbox{ if }\gamma\hbox{ is outgoing.}    
\end{array}
\right.
\end{equation}
and 
\[\ind(\Phi)=
\left\{
\begin{array}{rl}
  1&   \hbox{ if } \gamma \hbox{ is outgoing }   \\
  -1&   \hbox{ if } \gamma \hbox{ is incoming } .  \\
\end{array}
\right.\]
\end{proof}

Because of the locality or the construction we can consider an arbitrary finite number of leads and design currents.

\begin{figure}[h]
\centerline {
\includegraphics[width=8cm]{./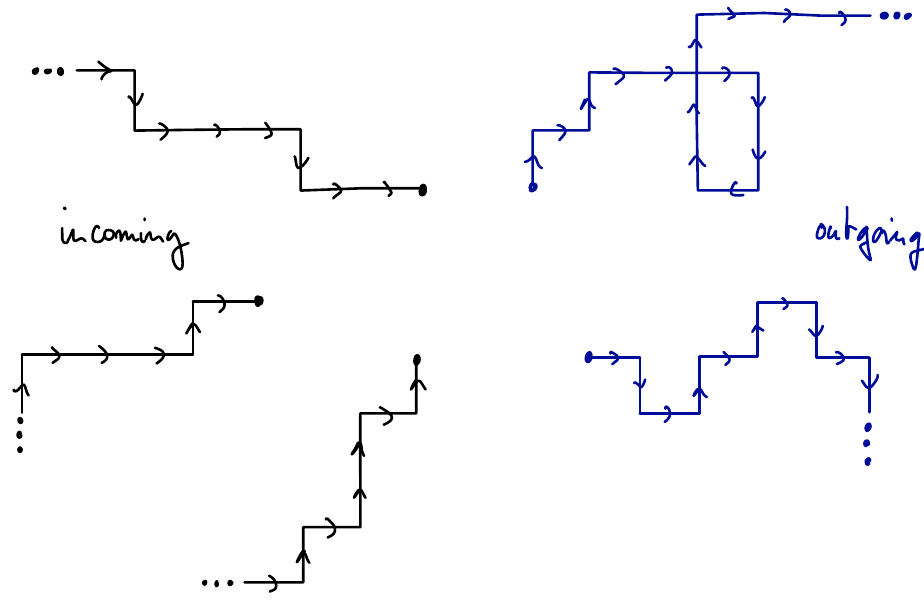}
}
\caption{Networks of leads}
\end{figure}

\begin{definition}We say that leads do not cross tangentially if for any $(x,\tau)\in\bZ^d\times I_{2d}$ there is at most one lead $\gamma$ such that $(x,\tau)\in\ran(\gamma,\tau_\gamma)$
\end{definition}

\begin{prop}\label{prop:severalleads} For $n_i,n_o\in\bN$ consider $n_i$ incoming leads $\gamma_j$ and $n_o$ outgoing leads $\rho_k$ which do not cross tangentially. Let $\bbC$ be a coin satisfying condition (\ref{cond:c}) along all leads and $U=\cshift\bbC$ the associated walk. Then the total projections on the quantum leads are mutually orthogonal and for 
\[\bbP=\sum_j P_{\gamma_j}+\sum_k P_{\rho_k} \hbox{ and the flux } \Phi=U^\ast \bbP U-\bbP\]
it holds
\[\ind(\Phi)=n_o-n_i.\]
\end{prop}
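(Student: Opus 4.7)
The plan is to reduce everything to the single-lead case handled in Proposition \ref{prop:onelead} by exploiting the mutual orthogonality of the individual quantum-lead projections, which in turn follows from the admissibility of each lead together with the hypothesis that the leads do not cross tangentially.

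First I would verify that $\bbP$ is a well-defined orthogonal projection. The admissibility of each lead ensures the map $t\mapsto(\gamma(t),\tau_\gamma(t))$ is injective along that lead, so the kets $\Ket{\gamma(t),\tau_\gamma(t)}$ are mutually orthonormal within a lead. The non-tangential-crossing hypothesis then guarantees that, across distinct leads, the pairs $(\gamma(t),\tau_\gamma(t))$ and $(\gamma'(s),\tau_{\gamma'}(s))$ never coincide, so all rank-one projections occurring in any of the $P_{\gamma_j}(t)$ and $P_{\rho_k}(s)$ are pairwise orthogonal. Hence the (in general infinite) sum defining $\bbP$ converges strongly to an orthogonal projection.

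Next, because the sum $\bbP=\sum_j \bbP_{\gamma_j}+\sum_k \bbP_{\rho_k}$ converges strongly and each summand is an orthogonal projection, I obtain
\[
\Phi \;=\; U^\ast\bbP U-\bbP \;=\; \sum_j \bigl(U^\ast\bbP_{\gamma_j}U-\bbP_{\gamma_j}\bigr) + \sum_k \bigl(U^\ast\bbP_{\rho_k}U-\bbP_{\rho_k}\bigr) \;=\; \sum_j \Phi_{\gamma_j} + \sum_k \Phi_{\rho_k},
\]
where the individual $\Phi_{\mathrm{lead}}$'s are exactly the operators analysed in Proposition \ref{prop:onelead}. By (\ref{eq:phiinout}) each incoming lead contributes a rank-one operator $\Phi_{\gamma_j}=-\Ket{\gamma_j(-1),\tau_{\gamma_j}(-1)}\Bra{\gamma_j(-1),\tau_{\gamma_j}(-1)}$ with eigenvalue $-1$, and each outgoing lead contributes a rank-one operator $\Phi_{\rho_k}=U^\ast\Ket{\rho_k(1),\tau_{\rho_k}(1)}\Bra{\rho_k(1),\tau_{\rho_k}(1)}U$ with eigenvalue $+1$. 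Since there are finitely many leads, $\Phi$ has finite rank, so the eigenvalue $1$ of $\Phi^2$ is automatically isolated of finite multiplicity and $\ind(\Phi)$ is well defined.

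The main (and really the only) step that needs care is checking that no cancellation occurs between the lead contributions, so that the dimensions of the $\pm1$-eigenspaces of $\Phi$ really do add up. The unit vectors $\{\Ket{\gamma_j(-1),\tau_{\gamma_j}(-1)}\}_{j=1}^{n_i}$ are mutually orthogonal by the non-tangential crossing hypothesis applied at time $-1$; likewise, the unit vectors $\{\Ket{\rho_k(1),\tau_{\rho_k}(1)}\}_{k=1}^{n_o}$ are mutually orthogonal by the same hypothesis applied at time $1$, and since $U$ is unitary the pushed vectors $\{U^\ast\Ket{\rho_k(1),\tau_{\rho_k}(1)}\}_{k=1}^{n_o}$ are mutually orthogonal as well. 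The $+1$- and $-1$-eigenspaces of $\Phi$ are automatically orthogonal, so
\[
\dim\ker(\Phi-\bI)=n_o,\qquad \dim\ker(\Phi+\bI)=n_i,
\]
and therefore $\ind(\Phi)=n_o-n_i$. The expected obstacle, the possible entanglement of the different lead contributions under $U$, does not materialise: condition (\ref{cond:c}) ensures that $U$ acts as a pure phase shift along each individual lead, so the single-lead computation of Proposition \ref{prop:onelead} applies verbatim to each $\Phi_{\mathrm{lead}}$, and the no-tangential-crossing hypothesis prevents the rank-one terms from different leads from overlapping.
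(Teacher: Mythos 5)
Your setup is sound and follows the paper's route: the mutual orthogonality of the rank-one projections $P_\alpha(t)$ (from admissibility within each lead and no tangential crossing between leads), the strong convergence of $\bbP$, and the decomposition $\Phi=\sum_k U^\ast P_{\rho_k}(1)U-\sum_j P_{\gamma_j}(-1)$ obtained from (\ref{eq:phiinout}) are all correct. The gap is in your final counting step. You check that the $n_o$ vectors $U^\ast\Ket{\rho_k(1),\tau_{\rho_k}(1)}$ are mutually orthogonal and that the $n_i$ vectors $\Ket{\gamma_j(-1),\tau_{\gamma_j}(-1)}$ are mutually orthogonal, but you never establish orthogonality \emph{between} the two families, and the hypotheses do not give it: condition (\ref{cond:c}) constrains $C(\gamma_j(t))$ only for $t,t+1\in G$, hence says nothing about $U\Ket{\gamma_j(-1),\tau_{\gamma_j}(-1)}$, which may have a nonzero component on some $\Ket{\rho_k(1),\tau_{\rho_k}(1)}$. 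Concretely, in $d=1$ take the incoming lead $\gamma(t)=t$, $t\in-\bN$, and the outgoing lead $\rho(t)=t-1$, $t\in\bN$, both with tangent $\Ket{+1}$ (these are admissible and do not cross tangentially), and choose $C(-1)=\bI$; then $U^\ast\Ket{\rho(1),+1}=\Ket{\gamma(-1),+1}$, so $\Phi=0$ and $\dim\ker(\Phi\mp\bI)=0$, not $1$. Your remark that the $\pm1$-eigenspaces of $\Phi$ are automatically orthogonal is true but does not help: without the cross-orthogonality, the ranges of the positive and negative parts of your decomposition are simply not eigenspaces of $\Phi$.

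The stated conclusion survives, and the repair is exactly what the paper's one-line argument relies on: $\Phi$ is a finite sum of rank-one self-adjoint operators, hence trace class, so by Proposition \ref{lem:wandering}(2) its index equals its trace, and additivity of the trace applied to (\ref{eq:phiinout}) gives
\[
\ind(\Phi)=\tr(\Phi)=\sum_k\tr\bigl(U^\ast P_{\rho_k}(1)U\bigr)-\sum_j\tr\bigl(P_{\gamma_j}(-1)\bigr)=n_o-n_i,
\]
with no orthogonality between incoming and outgoing contributions required (in the example above this correctly yields $1-1=0$). So replace the eigenspace count by the trace identity and your proof is complete.
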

\begin{proof}At a given $x\in\bZ^d$ no more than $2d$ paths can cross non tangentially so there  exists a $C(x)\in U(2d)$ satisfying  (\ref{cond:c}). The non-tangential condition implies $P_\alpha(t)P_\beta(s)=0, \forall t\neq s, \forall \alpha=\beta\in\{\gamma_j,\rho_k\}$. The result then follows because  equation (\ref{eq:phiinout}) implies: $\ind(\Phi_{P_1+P_2})=\ind(\Phi_1)+\ind(\Phi_2)$
\end{proof}

\begin{rem}Note that the coin $C(x)$ is arbitrary for $x$ outside the leads and that $n_o-n_i\neq0$ implies $\sigma_{ac}(U)=S^1$; also the spectral result remains true if condition (\ref{cond:c}) is only satisfied asymptotically in the sense of Corollary \ref{cor:perturbation}, point 3.
\end{rem}
\begin{examples}In $\bZ^2$ consider


\begin{enumerate}
\item two outgoing leads: $\rho_1(n)=(0,n-1), \forall n\in\bN$, $\rho_2(n)=(-n+1,0), \forall n\in\bN$ with tangents $\tau_{\rho_1}(n)=(0,1)$, $\tau_{\rho_2}(n)=(-1,0)$. Then for $\Phi=U^\ast(P_{\rho_1}+P_{\rho_2})U-(P_{\rho_1}+P_{\rho_2})$ : $\ind(\Phi)=2$; 
\item an incoming and an outgoing lead: $\gamma_1(n)=(0,-n-1), \forall n\in-\bN$, $\rho_2(n)=(-n+1,0), \forall n\in\bN$ with tangents $\tau_{\gamma_1}(n)=(0,-1)$, $\tau_{\rho_2}(n)=(-1,0)$. Then for $\Phi=U^\ast(P_{\gamma_1}+P_{\rho_2})U-(P_{\gamma_1}+P_{\rho_2})$ : $\ind(\Phi)=1-1=0$.
\end{enumerate}
\end{examples}

\subsection{Networks of leads tangential to bulk boundaries}
We now design quantum walks which propagate along a network of leads on a surface bounding a half-space and such that there is no flow out of the halfspace. We start by constructing a projection and a coin such that there is no flow out the halfspace and add conducting leads on the surface in a second step.

Consider the lattice 
\[\bZ^{d+1} \hbox{ split in two halfs } \bZ^{d+1}_\pm:=\bZ^d\times (\pm \bN) \hbox{ separated by the boundary } \Gamma=\bZ^d\times\{0\}. \]
The tangent space is 
\[T\Gamma=\{(\pm j), j\in\{1,\ldots,d\}\}\subset I_{2(d+1)}, \hbox{ and the in-(out-)ward normal } \pm N=(\pm (d+1));\]
we use the same notations for the quantum directions $T\Gamma=\{\Ket{\pm j}\}_{j\in\{1,\ldots,d\}}\subset \bC^{2(d+1)},  \Ket{\pm N}=\Ket{\pm (d+1)}$

\begin{figure}[htb]
\centerline{
\includegraphics[width=8cm]{./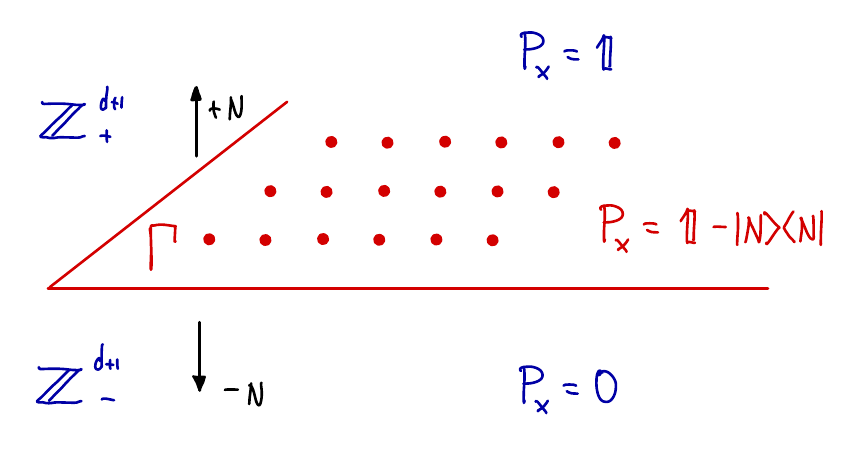}
}
\caption{Splitting by a hypersurface}
\end{figure}

\begin{lem} Let $\bbP$ be the adapted projection with symbol
\[P(x)=
\left\{
\begin{array}{lr}
  \bI &   x\in\bZ_+^{d+1} \\
  0&   x\in\bZ_-^{d+1}  \\
  \Ket{-N}\Bra{-N}+P_{T\Gamma}=\bI-\Ket{N}\Bra{N}&   x\in\Gamma  
\end{array}
\right.
\]
(with $P_{T\Gamma}$ the projection on the tangential directions) and $\bbC$ any coin whose symbol leaves the tangent space invariant and which reflects the normal on $\Gamma$, i.e.:
\[C(x) T\Gamma=T\Gamma,  C(x)\Ket{\pm N}=\Ket{\mp N}\quad\forall x\in\Gamma\]
then $\bbP$ is homogeneous in $\bZ^{d+1}$ and for $U=\cshift\bbC$ it holds
\[\Phi=U^\ast \bbP U-\bbP=0.\]
\end{lem}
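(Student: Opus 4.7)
Since the claim $\Phi = 0$ is an identity between multiplication operators on $\bH = \ell^2(\bZ^{d+1}, \bC^{2(d+1)})$, it reduces to showing pointwise that $\Phi(x) = C^\ast(x)\widehat{P}(x)C(x) - P(x) = 0$ and, for homogeneity, that $\dim\ran \widehat{P}(x) = \dim\ran P(x)$ for every $x \in \bZ^{d+1}$. My plan is a case analysis organised by the last coordinate $x_{d+1}$, splitting $\bZ^{d+1}$ into five regions: the bulk interiors $|x_{d+1}| \geq 2$, the two layers $x_{d+1} = \pm 1$ adjacent to $\Gamma$, and $\Gamma$ itself. In the bulk regions, every neighbour $x+\tau$ sits in the same open half-space as $x$, so $P(x+\tau) \in \{0, \bI\}$ is independent of $\tau$; the formula $\widehat{P}(x) = \sum_\tau P_\tau P(x+\tau) P_\tau$ then collapses to $\bI$ or $0$, i.e.\ to $P(x)$, making $\Phi(x) = 0$ immediate.

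For the two layers $x_{d+1} = \pm 1$, exactly one neighbour lies on $\Gamma$ and carries the anisotropic projection $\bI - \Ket{N}\Bra{N}$, while the remaining $2d+1$ neighbours are bulk. The key algebraic fact I would use is that $P_{\pm N}$ and $\Ket{N}\Bra{N}$ project onto one-dimensional subspaces spanned by distinct canonical basis vectors of $\bC^{2(d+1)}$. For the $+1$ layer, $P_{-N}\Ket{N}\Bra{N}P_{-N} = 0$, so $\widehat{P}(x)$ reassembles to $\sum_\tau P_\tau = \bI = P(x)$; for the $-1$ layer, $P_N\Ket{N}\Bra{N}P_N = \Ket{N}\Bra{N} = P_N$ exactly cancels $P_N \bI P_N = P_N$, leaving $\widehat{P}(x) = 0 = P(x)$. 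In both layer cases $\widehat{P}(x) = P(x)$ is forced, so $\Phi(x) = 0$ without any appeal to $C(x)$.

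The genuinely nontrivial case is $x \in \Gamma$. Here the $2d$ tangential neighbours lie in $\Gamma$ and each contributes $P_{\pm j}$ (again by orthogonality with $\Ket{N}\Bra{N}$); the $+N$ neighbour is in $\bZ_+^{d+1}$ contributing $P_N$; the $-N$ neighbour is in $\bZ_-^{d+1}$ contributing $0$. Summing yields
\[
\widehat{P}(x) = P_{T\Gamma} + P_N = \bI - \Ket{-N}\Bra{-N},
\]
which has rank $2d+1 = \dim\ran P(x)$, establishing homogeneity on $\Gamma$ and hence globally. Finally, the coin reflection $C(x)\Ket{\pm N} = \Ket{\mp N}$ yields $C^\ast(x)\Ket{-N}\Bra{-N}C(x) = \Ket{N}\Bra{N}$, so
\[
C^\ast(x)\widehat{P}(x)C(x) = \bI - \Ket{N}\Bra{N} = P(x),
\]
completing $\Phi(x) = 0$ at $x \in \Gamma$. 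The main obstacle is really only the careful bookkeeping of which rank-one projectors annihilate which; the step that carries the content of the hypothesis is this last identity on $\Gamma$, where the normal reflection property of $C(x)$ is used essentially. Note that the tangential invariance $C(x)T\Gamma = T\Gamma$ is not needed for $\Phi$ to vanish here, but it becomes relevant once conducting leads along $\Gamma$ are added in the next step.
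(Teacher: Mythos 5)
Your proof is correct and follows essentially the same route as the paper: a case-by-case computation of $\widehat{P}(x)$ on $\Gamma$, on the adjacent layers $\Gamma\pm N$, and in the bulk, with the reflection property of $C$ used only on $\Gamma$ to conjugate $\widehat{P}(x)=\bI-\Ket{-N}\Bra{-N}$ into $P(x)=\bI-\Ket{N}\Bra{N}$. Your side remark that the tangential invariance is not needed for $\Phi=0$ is also accurate (in fact it already follows from unitarity together with the normal reflection condition).
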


Note that because the coin is reflecting on the surface a Quantum Walker can move tangentially to, but cannot cross $\Gamma$.

\begin{proof}
By homogeneity it is sufficient to consider the $1$-neighborhood of $\Gamma=\Gamma\cup (\Gamma+N) \cup (\Gamma -N)$.
For $x\in\Gamma$ we have 
\[\widehat{P}(x)=\sum_{\tau\in\Gamma}P_\tau\bI P_\tau+P_N\underbrace{P(x+N)}_{=\bI}P_N+P_{-N}\underbrace{P(x-N)}_0 P_{-N}=\bI\restriction{T\Gamma}+P_N,\]
for $x\in\Gamma+N$ : 
\[\widehat{P}(x)=\sum_{\tau\in I_{2(d+1)}\setminus\{-N\}}P_\tau+\underbrace{P_{-N}P(x-N)P_{-N}}_{P_{-N}}=P(x)\]
and for $x\in\Gamma-N$ :
\[\widehat{P}(x)=P_NP(x+N)P_N=0=P(x).\]
It follows that $rank\widehat{P}(x)=rank P(x), \forall x$ and from the properties of $C$:
\[\Phi(x)=C^\ast(x)\widehat{P}(x)C(x)-P(x)=0, \forall x.\]
\end{proof}

We now are able to consider a network of leads on the surface $\Gamma$ and a quantum walk such that there is no crossing from $\bZ_+^{d+1}$ to $\bZ_-^{d+1}$ but which propagates on $\Gamma$ with arbitrary index of $\Phi$:

\begin{figure}[htb]
\centerline{
\includegraphics[width=8cm]{./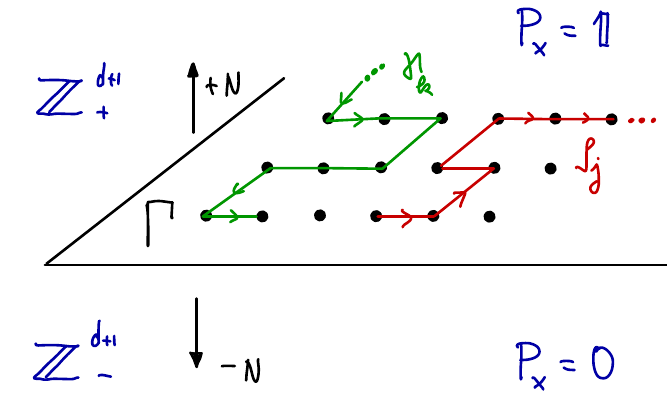}
}
\caption{Leads on a surface}
\end{figure}

\begin{thm}\label{thm:quantumleads}
For $n_i,n_o\in\bN$ consider $n_i$-incoming leads $\gamma_j$ and $n_o$ outgoing leads $\rho_k$ on $\Gamma$ which are admissible and do not cross tangentially. 

Let $\bbP$ be the adapted projection with symbol
\[P(x)=
\left\{
\begin{array}{lr}
  \bI &   x\in\bZ_+^{d+1} \\
  0&   x\in\bZ_-^{d+1}  \\
  \Ket{-N}\Bra{-N}&   x\in\Gamma  
\end{array}
\right.
\]

and, as in proposition \ref{prop:severalleads}
 \[\bbP_L=\sum_j P_{\gamma_j}+\sum_k P_{\rho_k}\]
 the total projection on the leads. Let $\bbC$ be any coin which is reflecting on $\Gamma$ :
 \[C(x)\Ket{\pm N}=\Ket{\mp N}\qquad \forall x\in\Gamma\]
 and such that condition (\ref{cond:c}) is verified for all leads. Then for
 \[Q=\bbP+\bbP_L, \quad \Phi=U^\ast QU-Q\] \hbox{ it holds }\[ \ind(\Phi)=n_0-n_i.\]
 Furthermore
\begin{enumerate}
\item $U\ell^2(\bZ_\pm^{d+1}; \bC^{2d})\subset \ell^2(\bZ_\pm^{d+1}\times\Gamma; \bC^{2d})$
\item  $\ran P_{\rho_k}(1)$ is a wandering subspace for $U$
\item $\ran P_{\gamma_j}(-1)$ is a wandering subspace for $U^\ast$
\end{enumerate}
\end{thm}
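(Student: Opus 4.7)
The plan is to exploit the additive structure $Q = \bbP + \bbP_L$ and reduce the index computation to Proposition \ref{prop:severalleads}. First I would verify that $\bbP$ and $\bbP_L$ are mutually orthogonal: on $\Gamma$, $\bbP$ selects the inward normal direction $\Ket{-N}$ while $\bbP_L$ (at points lying on a lead) selects a direction in the tangent space $T\Gamma$, and these are orthogonal; off $\Gamma$, $\bbP_L$ vanishes by definition. So $Q$ is a projection and $\Phi = (U^*\bbP U - \bbP) + (U^*\bbP_L U - \bbP_L) =: \Phi_\bbP + \Phi_{\bbP_L}$.

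The key step is to show $\Phi_\bbP = 0$, i.e.\ that the halfspace projection is already invariant under conjugation by the reflecting walk. Pointwise the symbol of $\Phi_\bbP$ is $C^*(x)\widehat{P}(x)C(x) - P(x)$. Away from the one-neighborhood of $\Gamma$ all neighbors of $x$ lie in the same halfspace, so $\widehat{P}(x) = P(x) \in \{0, \bI\}$ trivially. For $x \in \Gamma + N$ the four neighbor contributions sum to $P_N + P_{-N} + \sum_{\tau \in T\Gamma} P_\tau = \bI = P(x)$; for $x \in \Gamma - N$ all contributions vanish, matching $P(x) = 0$. The substantive case is $x \in \Gamma$, where only the neighbor $x + N \in \bZ_+^{d+1}$ contributes, giving $\widehat{P}(x) = P_N\, \bI\, P_N = \Ket{N}\Bra{N}$; the reflection identity $C(x)\Ket{-N} = \Ket{N}$ then produces $C^*(x)\Ket{N}\Bra{N}C(x) = \Ket{-N}\Bra{-N} = P(x)$. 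Hence $\Phi = \Phi_{\bbP_L}$, and because the reflecting condition only constrains $C(x)$ on $\mathrm{span}\{\Ket{\pm N}\}$ whereas the lead condition (\ref{cond:c}) constrains it on the orthogonal subspace $T\Gamma$, the two sets of requirements are simultaneously realisable; Proposition \ref{prop:severalleads} then yields $\ind(\Phi) = n_o - n_i$.

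For the three concluding items, item (1) follows because $\cshift$ shifts each component by exactly one lattice unit along a coordinate axis, so for $x \in \bZ_+^{d+1}$ (i.e.\ $x_{d+1} \ge 1$) every translate $x + \tau$ has $(x + \tau)_{d+1} \ge 0$, placing $U\psi$ in $\ell^2(\bZ_+^{d+1} \cup \Gamma)$; the $\bZ_-^{d+1}$ case is symmetric. Items (2) and (3) are direct applications of Proposition \ref{prop:onelead} applied lead by lead: for an outgoing $\rho_k$ one has $U^n \ran P_{\rho_k}(1) = \ran P_{\rho_k}(1+n)$, pairwise orthogonal for distinct $n$ by admissibility of $\rho_k$ and the non-tangential crossing hypothesis, which is exactly the wandering property; the incoming case is the adjoint statement for $U^*$. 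The main technical obstacle I anticipate is the pointwise case analysis at $x \in \Gamma$ in the second step together with verifying that the reflection and lead constraints on $C(x)$ are jointly realisable at a lead point on $\Gamma$; everything else is bookkeeping built on Propositions \ref{prop:onelead} and \ref{prop:severalleads}.
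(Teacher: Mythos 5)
Your proof is correct and follows essentially the same route as the paper's: decompose $Q=\bbP+\bbP_L$ into mutually orthogonal pieces, show the half-space part carries no flux so that $\Phi=\Phi_{\bbP_L}$, and invoke Propositions \ref{prop:severalleads} and \ref{prop:onelead} for the index and the wandering subspaces. Your explicit pointwise verification that $\Phi_{\bbP}=0$ is a worthwhile supplement rather than a redundancy, since the paper's preceding lemma establishes the vanishing of the flux only for the projection carrying the extra tangential component $P_{T\Gamma}$ on $\Gamma$, whereas the theorem's $\bbP$ omits it.
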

\begin{proof}$\bbP_L$ contains only tangential directions while $P$ contains none,  $\ind(\Phi_{\bbP_L})+\ind(\Phi_{\bbP})=\ind(\Phi_Q)$. For the same reasons the two conditions on the coin are compatible and the dynamical results follow from the previous results.
\end{proof}

\begin{rem}
\begin{enumerate}
\item Note that the coin $C(x)$ is arbitrary for $x$ outside the leads and that $n_o-n_i\neq0$ implies $\sigma_{ac}(U)=S^1$; also the spectral result remains true if condition (\ref{cond:c}) is only satisfied asymptotically in the sense of Corollary \ref{cor:perturbation}, point 3.
\item It is straightforward the extend the above results to hypersurfaces more general than $\Gamma$.
\end{enumerate}

\end{rem}

\section{Transport perpendicular to bulk boundaries, the Chalker--Coddington model}\label{section:cc}

This section is devoted to spectral and transport results for Chalker-Coddington models defined in the  two-dimensional plane. It is an extension of the spectral analysis for the model  in a strip via a flux study lead in \cite{ABJ4}. For background on the model we refer to  \cite{ABJ4} and references therein. Let 
\[\left\{S_{j,2k}\right\}_{j,k\in\bZ}\]
be a collection of unitary $2 \times 2$ matrices $S_{j,2k}\in U(2)$ called scattering matrices, even or odd according to the parity of $j$; they define the unitary operator which characterises the
 Chalker-Coddington model 
\[U:\ell^2(\bZ^2;\bC)\to\ell^2(\bZ^2;\bC)\]
in the following way :
denoting by  $\ket{j,k}$ the canonical basis vectors of $l^2(\bZ^2;\bC)$,  $U$  is defined according to figure (\ref{fig:scatteringnetwork}) by:

\begin{align}
&\begin{pmatrix} U |2j,2k\rangle \cr U |2j+1,2k-1\rangle \end{pmatrix} := S_{2j,2k} \begin{pmatrix} |2j,2k-1\rangle \cr |2j+1,2k\rangle \end{pmatrix},\notag \\
&\begin{pmatrix} U |2j+1,2k\rangle \cr U  |2j+2,2k+1\rangle \end{pmatrix} := S_{2j+1,2k} \begin{pmatrix} |2j+2,2k\rangle \cr |2j+1,2k+1\rangle \end{pmatrix}. \label{def:UCC}
\end{align}
\begin{figure}[hbt]
\centerline {
\includegraphics[width=8cm]{./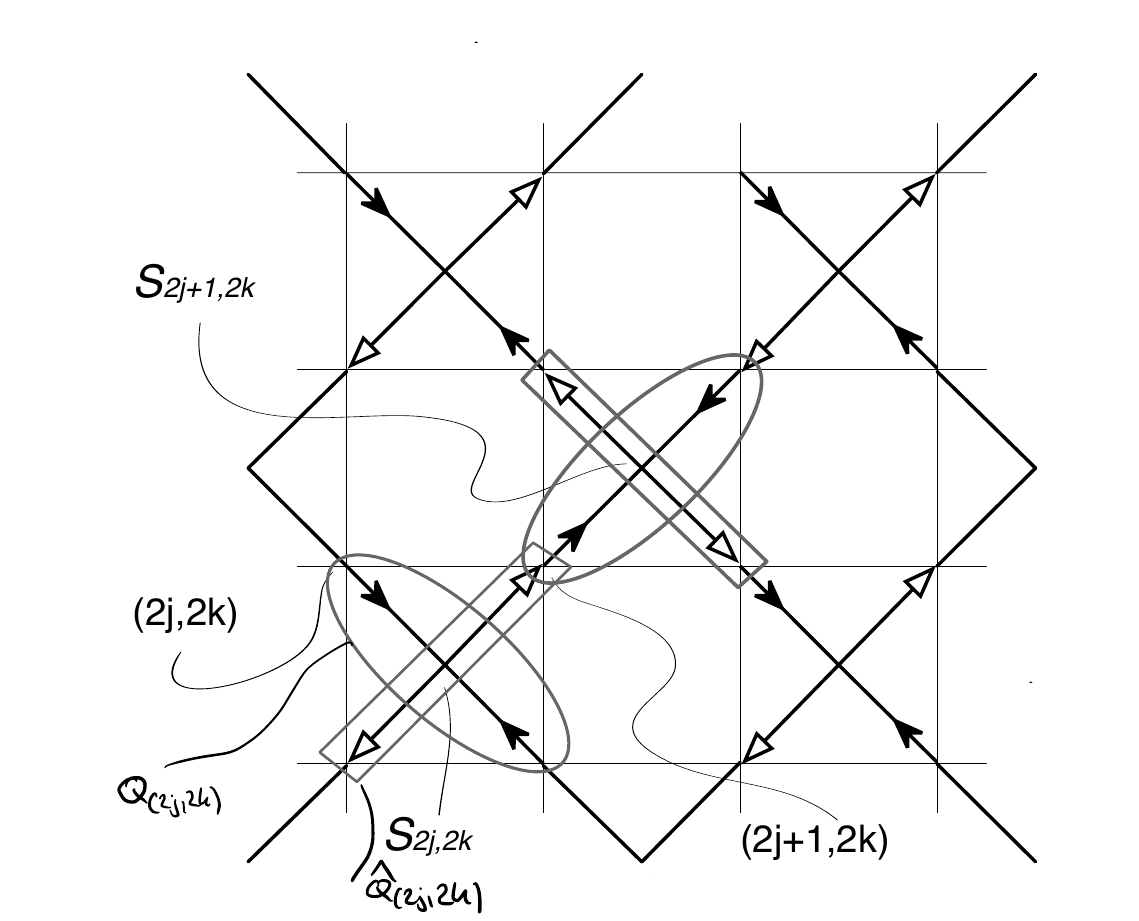}
}
\caption{A Chalker--Coddington model with its incoming (solid arrows) and outgoing links.}
\label{fig:scatteringnetwork}
\end{figure}

The in- and outgoing subspaces of the scattering matrices play a prominent role in the following: 

for $z=(j,2k)\in\bZ\times2\bZ$ denote $Q_z$ the orthogonal projections on
\begin{eqnarray*}
&&\ran Q_{2j,2k}=span\left\{\Ket{2j,2k},\Ket{2j+1,2k-1}\right\},\\
&&\ran Q_{2j+1,2k}=span\left\{\Ket{2j+2,2k+1},\Ket{2j+1,2k}\right\}.
\end{eqnarray*}
and their images $\widehat{Q}_z=U Q_zU^\ast$ on
\begin{eqnarray*}
&&\ran \widehat{Q}_{2j,2k}=span\left\{\Ket{2j+1,2k},\Ket{2j,2k-1}\right\},\\
&&\ran \widehat{Q}_{2j+1,2k}=span\left\{\Ket{2j+1,2k+1},\Ket{2j+2,2k}\right\}.
\end{eqnarray*}

\begin{rems}\begin{enumerate}
\item In the original paper \cite{cc} the moduli of the elements of $S_{j,2k}$ were constant and the phases random variables. Here we have no restrictions.
\item Note that $Q_z,\widehat{Q}_z$ do not depend on the collection of scattering matrices.
\end{enumerate}
\end{rems}

For the flux operator we observe the following :

\begin{prop}\label{prop:indexformula} Let $G$ be a subset of $\bZ^2$ and $P$ the multiplication operator $P=\chi(x\in G)$, 
\[\Phi=U^\ast P U -P.\]
Then it holds

\begin{enumerate}
\item $\left\lbrack\Phi,Q_z\right\rbrack=0 \quad \forall z\in\bZ\times2\bZ$;
\item $\Phi Q_z=U^\ast (P\widehat{Q}_z)U- PQ_z$;
\item $\ind(\Phi)$ is well defined iff for a $c,R>0$ : $\sup_{\vert z\vert> R}\Vert \Phi Q_z\Vert\le c<1$,  then it holds
\[\ind(\Phi)=\sum_{z, \vert z\vert \le R} \dim\ran(P\widehat{Q}_z)-\dim\ran(PQ_z).\]
\end{enumerate}
\end{prop}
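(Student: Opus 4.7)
The plan is to exploit the block structure provided by the orthogonal family $\{Q_z\}_{z\in\bZ\times 2\bZ}$, which partitions the identity $\sum_z Q_z=\bI$ into rank-two coordinate projections, together with the observation that in the canonical basis everything in sight is diagonal apart from $U$ itself. The defining relation $\widehat{Q}_z=UQ_zU^\ast$ of the Chalker--Coddington model rewrites as $UQ_z=\widehat{Q}_z U$ and, by taking adjoints, as $Q_zU^\ast=U^\ast\widehat{Q}_z$. Because $P=\chi(x\in G)$ is diagonal in the canonical basis while each $Q_z$ and each $\widehat{Q}_z$ is a sum of two rank-one projections on basis vectors, $P$ commutes with every $Q_z$ and every $\widehat{Q}_z$. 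Items~$(1)$ and~$(2)$ then follow from the single algebraic computation
\[
\Phi Q_z=U^\ast PU Q_z-PQ_z=U^\ast(P\widehat{Q}_z)U-PQ_z,
\]
together with the symmetric one $Q_z\Phi=U^\ast\widehat{Q}_z P U-PQ_z=U^\ast(P\widehat{Q}_z)U-PQ_z$, which coincides.

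For item~$(3)$, the commutations $[\Phi,Q_z]=0$ combined with $\sum_z Q_z=\bI$ yield the orthogonal block decomposition $\Phi=\bigoplus_z\Phi\restriction\ran Q_z$ into self-adjoint $2\times 2$ blocks, so $\sigma(\Phi)=\overline{\bigcup_z\sigma(\Phi\restriction\ran Q_z)}$ and $\|\Phi\|=\sup_z\|\Phi Q_z\|\le 1$. The eigenvalue $1$ of $\Phi^2$ is therefore isolated and of finite multiplicity precisely when only finitely many block spectra hit $\pm 1$ and the remaining ones are uniformly bounded away from $\pm 1$, which is the stated hypothesis $\sup_{|z|>R}\|\Phi Q_z\|\le c<1$.

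On every block, $\Phi\restriction\ran Q_z=A_z-B_z$ with $A_z:=U^\ast(P\widehat{Q}_z)U$ and $B_z:=PQ_z$. Both are projections: $P\widehat{Q}_z$ and $PQ_z$ are products of commuting projections, and $A_z$ inherits this from unitary conjugation; moreover, both have range inside the two-dimensional space $\ran Q_z$, so $A_z-B_z$ is trace class and supported on $\ran Q_z$. Proposition~\ref{lem:wandering}$(2)$ applied at $j=0$ --- i.e.\ the finite-dimensional relative-index identity $\ind(A-B)=trace(A-B)=\mathrm{rank}(A)-\mathrm{rank}(B)$ for two projections --- then gives
\[
\ind(\Phi\restriction\ran Q_z)=\dim\ran(P\widehat{Q}_z)-\dim\ran(PQ_z).
\]
The total index is additive over the direct sum, and for $|z|>R$ the bound $\|A_z-B_z\|<1$ excludes $\pm 1$ as block eigenvalues, forcing the two ranks to agree and the contribution to vanish. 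Only the finitely many terms with $|z|\le R$ survive, which is the advertised formula.

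No serious obstacle is anticipated: the conceptual step is recognising that the geometry of the Chalker--Coddington network simultaneously aligns the indicator projection $P$ with both the scattering projections $Q_z$ and their images $\widehat{Q}_z$, after which items~$(1)$--$(3)$ reduce to elementary linear algebra on $2$-dimensional invariant subspaces and an invocation of Proposition~\ref{lem:wandering}.
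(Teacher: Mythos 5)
Your proof is correct and follows essentially the same route as the paper: both use that $P$, $Q_z$, $\widehat{Q}_z$ are multiplication operators together with $UQ_zU^\ast=\widehat{Q}_z$ to get $[\Phi,Q_z]=0$, decompose $\Phi$ into the $2\times 2$ blocks $\Phi Q_z$, identify the hypothesis $\sup_{|z|>R}\Vert\Phi Q_z\Vert\le c<1$ with isolation and finite multiplicity of the eigenvalue $1$ of $\Phi^2$, and compute each block index by the finite-dimensional trace identity $\ind(\Phi Q_z)=\mathrm{trace}(\Phi Q_z)=\dim\ran(P\widehat{Q}_z)-\dim\ran(PQ_z)$. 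No gaps.
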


\begin{proof} By definition $Q_z,\widehat{Q}_z$ are multiplication operators on $\ell^2(\bZ^2;\bC)$ thus commute with $P$, or
\[\lbrack\Phi,Q_z\rbrack=\lbrack U^\ast P U,Q_z\rbrack=U^\ast\lbrack P, \widehat{Q}_z\rbrack U=0.\]
Also $\ker\left(\Phi^2-\bI\right)=\bigoplus_{z\in\bZ\times2\bZ}\ker\left((\Phi^2-\bI)Q_z\right)$ and $\sigma\left(\Phi^2\right)=\bigcup_{z\in\bZ\times2\bZ}\sigma\left(\Phi^2Q_z\right)$. By  assumption $dist\left(\{1\},\bigcup_{\vert z\vert>R}\sigma(\Phi^2Q_z)\right)>0$ thus $1$ is an isolated finite dimensional  eigenvalue and 
\[\ind(\Phi)=\sum_{\vert z\vert\le R} \dim\ker\left((\Phi-\bI)Q_z\right)-\sum_{\vert z\vert\le R}\dim\ker\left((\Phi+\bI)Q_z\right)=\]\
\[=\sum_{\vert z\vert\le R}\ind(\Phi Q_z)=\sum_{\vert z\vert\le R}trace(\Phi Q_z)=\sum_{z, \vert z\vert \le R} \dim\ran(P\widehat{Q}_z)-\dim\ran(PQ_z).\]
\end{proof}
To define an appropriate projection $P$ we use some graph terminology.  To the given $U$ be can associate the directed graph
\[G=(V,E)\]
with vertices in $V=\bZ^2$ and whose directed edges are defined by
\[\overrightarrow{xy}\in E \hbox{ \it iff } \langle y,U_c\ x\rangle\neq0\]
where $U_c$ is the critical model which has all its scattering matrices equal to $\frac{1}{\sqrt{2}}\left(
\begin{array}{cc}
  1   &-1   \\
  1   &  1  
\end{array}
\right)$.

Some features of the given $U$ are then described by attributing 
\[\hbox{ the weight }\vert\langle y,U x\rangle\vert \hbox{ to the edge } \overrightarrow{xy}\in E\]
and we call an edge open of its weight does not vanish, see figure (\ref{fig:weights}).

\begin{figure}[hbt]
\centerline {
\includegraphics[width=6cm]{./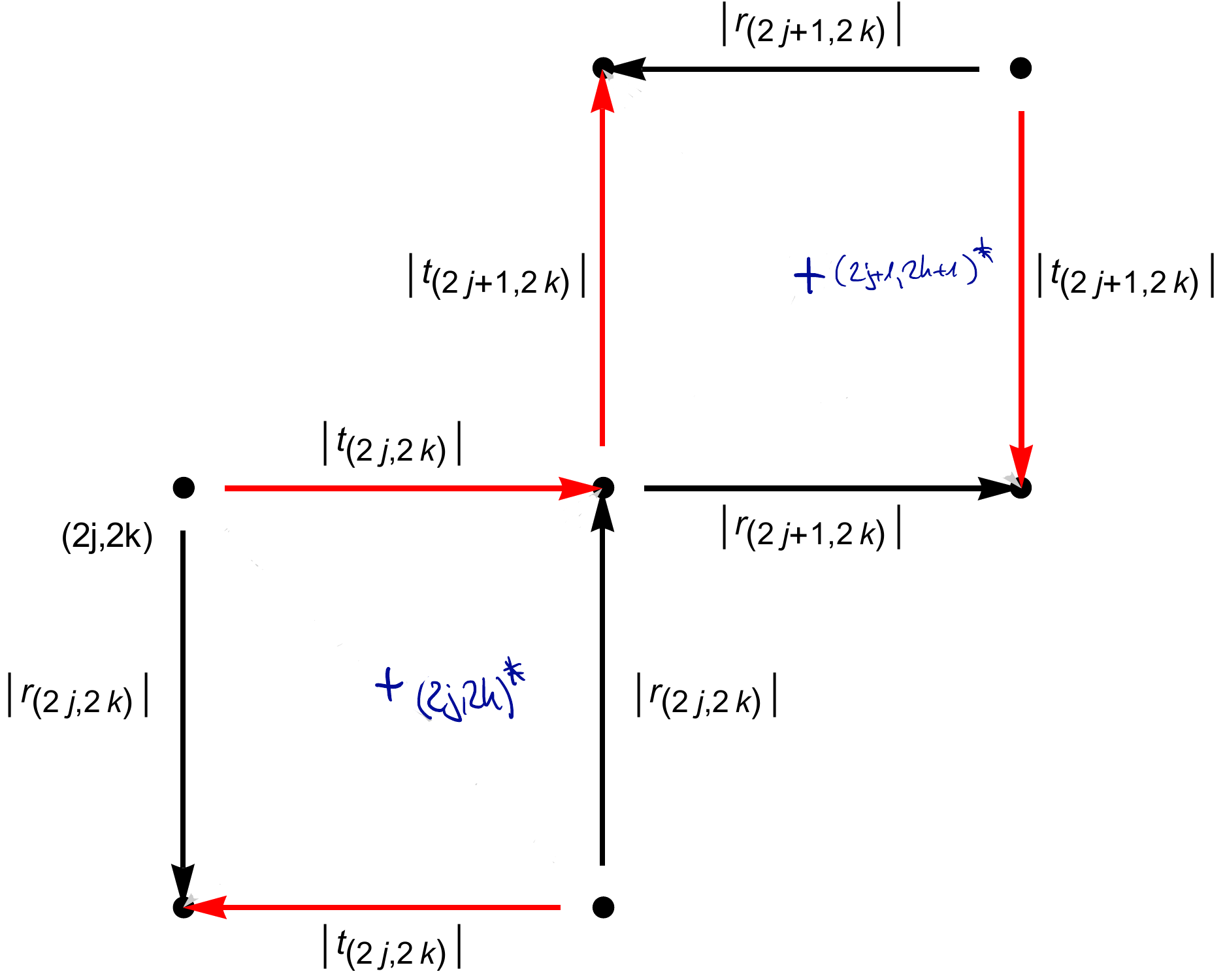}
}
\caption{The weights of $G$.}
\label{fig:weights}
\end{figure}

The dual graph 
\[G^\ast=(V^\ast,E^\ast)\]
has its vertices at the center of the faces of $G$
\[V^\ast=\bZ^2+\frac{1}{2}(1,-1)\]
and its edges (which we call links to distinguish) connect all neighbours at distance $1$ thus bisecting the edges of $G$.
We use the notations: 
\[V^\ast\ni v^\ast=(x_1^\ast,x_2^\ast):=(x_1,x_2)+\frac{1}{2}(1,-1)\] for integers $x_1,x_2$ and say that
\[x_j^\ast\in Even \hbox{ if } x_j\in2\bZ, \qquad x_j^\ast\in Odd \hbox{ if } x_j\in2\bZ+1. \]  Our  parameters for the scattering matrices $S_z$, $z\in\bZ\times 2\bZ$ are:
\[S_z=q_z\left(
\begin{array}{cc}
  r_z   &-t_z   \\
  \bar{t_z}   &  \bar{r_z}  
\end{array}
\right)\quad \hbox{ with } q_z\in S^1, r_z,t_z\in\bC \hbox{ s.t. } \vert r_z\vert^2+\vert t_z\vert^2=1.\]

\begin{figure}[hbt]
\centerline {
\includegraphics[width=8cm]{./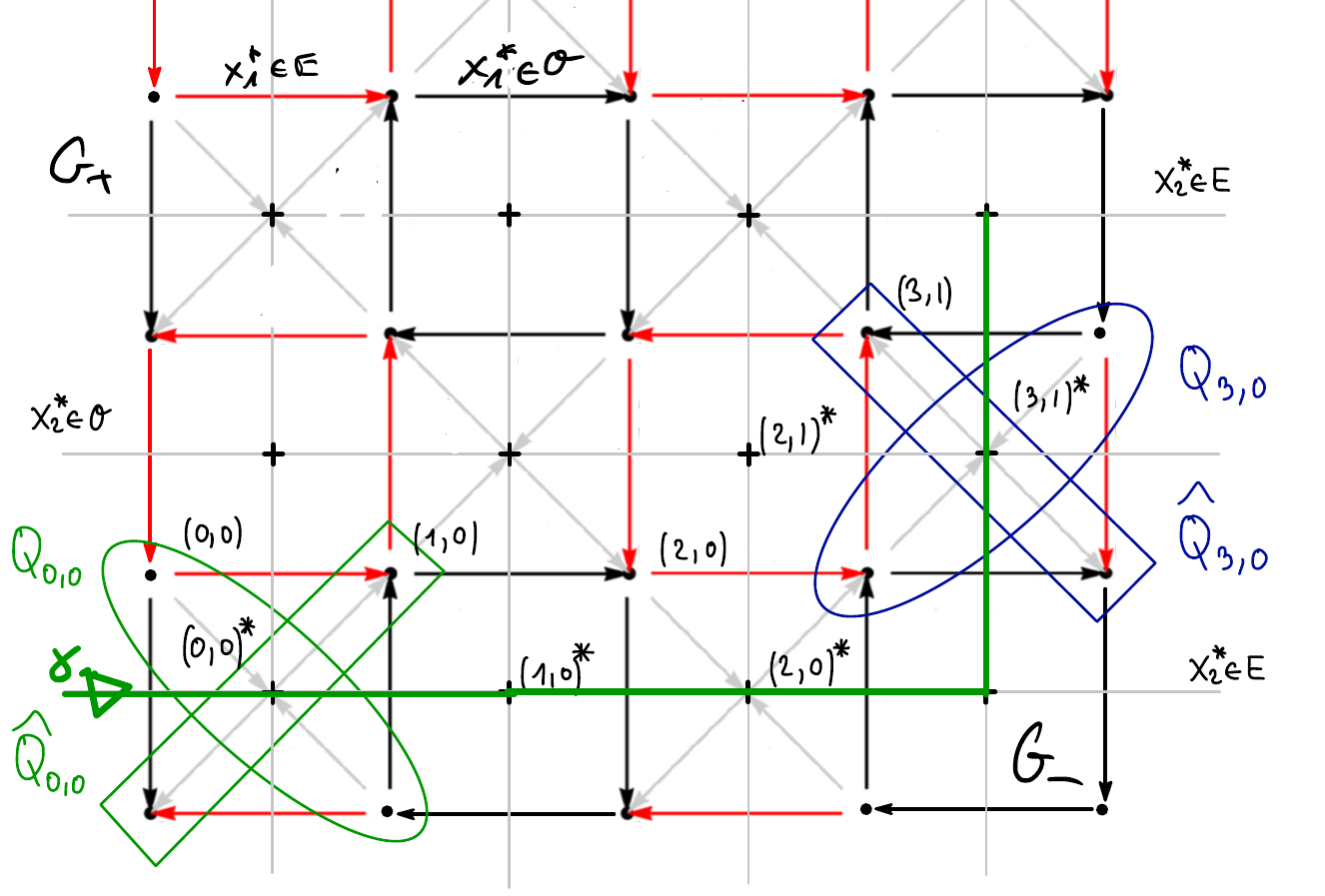}
}
\caption{Edges of weight $\vert r\vert$  (black) are crossed by links on $\gamma$ incident to $(x_1^\ast, x_2^\ast)$ which are : horizontal if  $x_2^\ast\in Even$ or 
vertical if  $x_1^\ast\in Odd$. Red edges are of weight $\vert t\vert$.}
\label{fig:notations}
\end{figure}
Now consider a partition of $G$ in two infinite connected subgraphs $G_+$ and $G_-$ and consider a path in $G^\ast$ which bisects the edge boundary of $G_+$.

 To be specific apprehend such a path as a continuously parametrized injective piecewise unit speed curve of segments of length one 
\[\gamma: \bR\to G^\ast\subset\bR^2 \hbox{ such that } \gamma(t)\in V^\ast \hbox{ and } \overrightarrow{\gamma(t)\gamma(t+1)}\in E^\ast \hbox{ for } t\in\bZ\]
oriented such that $G_+$ is to the left of $\gamma$; we call it an admissible\ path.
Let $V_+$ be the set of vertices of $G_+$  and consider its projection and flux operator in $\ell^2(\bZ^2; \bC)$ :
\[P_\gamma=\chi(x\in V_+), \quad \Phi_\gamma=U^\ast P_\gamma U-P_\gamma.\]

In the following we shall give sufficient conditions for non-trivial $\ind(\Phi_\gamma)$. By the general theory of Theorem \ref{thm:general} and Proposition \ref{lem:wandering} we know that all vertices which are not incident to the edge belong to  $\ker(\Phi_\gamma)$; in view of proposition \ref{prop:indexformula} we  label the set of interesting vertices as follows:

\begin{definition}Let $\gamma:\bR \to G^\ast$ be an admissible path. We call $E_\gamma\subset E$
the set of edges bisected by $\gamma$ and $V_\gamma$ the labels of subspaces $\ran Q_z$ which contain  vertices incident to $E_\gamma$ :
\[V_\gamma:=\left\{z\in\bZ\times2\bZ; P_\gamma^\perp U_c P_\gamma Q_z\neq0 \hbox{ or } P_\gamma U_c P_\gamma^\perp Q_z\neq0\right\}.\]
$\gamma$ is called  an $r$-path in $I\subset\bR$ if its restriction to $I$ bisects only edges of weight $\vert r\vert$. A $t$-path in $I$ is defined analogously.
\end{definition}
\begin{rem}\label{rem:directions}Observe that according to figure(\ref{fig:notations}) any link of an $r$-path can be incident to  dual vertices in $Odd\times Even$ but it has to be horizontal at  $Even \times Even$ and vertical at $Odd\times Odd$. For a $t$-path the opposite is true.
\end{rem}
\begin{lem}\label{lem:estimate}Let $\gamma$ be an admissible  $r$-path in $\bR$; then it holds for the operator and the trace norm: 
\[\Vert\Phi_\gamma\Vert\le\sup_{z\in V_\gamma}\vert r_z\vert,\quad \Vert\Phi_\gamma\Vert_1\le const. \sum_{z\in V_\gamma}\vert r_z\vert.\]
\end{lem}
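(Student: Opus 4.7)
The plan is to exploit the block decomposition of $\Phi_\gamma$ provided by Proposition \ref{prop:indexformula}: since $[\Phi_\gamma, Q_z]=0$ for every $z\in\bZ\times 2\bZ$ and the $Q_z$ partition the identity, we have
\[
\Vert\Phi_\gamma\Vert = \sup_{z} \Vert\Phi_\gamma Q_z\Vert, \qquad \Vert\Phi_\gamma\Vert_1 = \sum_{z} \Vert\Phi_\gamma Q_z\Vert_1.
\]
Each $\ran Q_z$ is two-dimensional, so the problem reduces to controlling a family of $2\times 2$ self-adjoint operators.

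For $z\notin V_\gamma$, by definition the four vertices spanning $\ran Q_z \oplus \ran\widehat Q_z$ all lie on the same side of $\gamma$, hence $P_\gamma$ acts as a scalar on both ranges and $\Phi_\gamma Q_z = 0$. For $z\in V_\gamma$, I would fix a basis $(e_1,e_2)$ of $\ran Q_z$ and a basis $(f_1,f_2)$ of $\ran\widehat Q_z$ so that the $t$-edges (those of weight $|t_z|$) pair $e_i$ with $f_i$. The $r$-path hypothesis then enters crucially: since no $t$-edge is bisected by $\gamma$, the characteristic function of $V_+$ takes the same value on $e_i$ and on $f_i$, so $P_\gamma$ is represented by the \emph{same} diagonal matrix $D = \mathrm{diag}(\chi_1,\chi_2)$, $\chi_i\in\{0,1\}$, in both bases.

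Writing $U\restriction\ran Q_z = q_z M$ for the $2\times 2$ matrix $M$ of $S_z$ in the chosen bases, Proposition \ref{prop:indexformula}(2) gives $\Phi_\gamma Q_z = M^\ast D M - D$. A direct computation using $|r_z|^2+|t_z|^2 = 1$ yields
\[
\Phi_\gamma Q_z = (\chi_2-\chi_1)\, A_z, \qquad A_z^\ast = A_z, \qquad A_z^2 = |r_z|^2\, \bI\restriction\ran Q_z,
\]
so the self-adjoint $A_z$ has spectrum $\{\pm|r_z|\}$. Since $|\chi_2-\chi_1|\in\{0,1\}$ this gives $\Vert\Phi_\gamma Q_z\Vert \le |r_z|$ and $\Vert\Phi_\gamma Q_z\Vert_1 \le 2|r_z|$; taking respectively the supremum and the sum over $z\in V_\gamma$ produces the two claimed inequalities, with constant $2$ in the trace-norm bound.

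The main bookkeeping obstacle is verifying the basis identification uniformly over the two parities of scatterer: according to Remark \ref{rem:directions}, the geometric roles of $r$-edges and $t$-edges at even-$j$ scatterers are exchanged with those at odd-$j$ ones, so the natural ordering of $(f_1,f_2)$ must be adjusted accordingly. Once the pairing is made correctly in each case, the explicit form of $M$ differs between the parities only by a permutation/phase which leaves the key identity $A_z^2 = |r_z|^2\bI$ intact, and the bounds follow uniformly.
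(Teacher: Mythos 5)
Your proposal is correct and follows essentially the same route as the paper: decompose $\Phi_\gamma$ over the blocks $Q_z$, use the $r$-path hypothesis to see that the non-bisected ($t$-weight) edges pair the vertices of $\ran Q_z$ and $\ran\widehat Q_z$ on the same side of $\gamma$, and conclude that each nonzero block squares to $\vert r_z\vert^2$ times the identity on the two-dimensional $\ran Q_z$, giving $\Vert\Phi_\gamma Q_z\Vert\le\vert r_z\vert$ and $\Vert\Phi_\gamma Q_z\Vert_1\le 2\vert r_z\vert$. The paper phrases this as $\Phi_\gamma^2 Q_z=\vert r_z\vert^2 Q_z$ and $\vert\Phi_\gamma\vert=\sum_{z\in V_\gamma}\vert r_z\vert Q_z$ with the geometric pairing justified by a figure, whereas you carry out the explicit $2\times 2$ matrix computation, but the argument is the same.
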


\begin{proof}  We suppress the subscript $\gamma$ to $\Phi$ and $P$ for the proof. Only states with label in $V_\gamma$ contribute and   $\lbrack \Phi^2,Q_z\rbrack=0$ so we have
\[\Phi^2=PU^\ast P^\perp U P+P^\perp U^\ast P U P^\perp=\sum_{z\in \bZ\times2\bZ} \Phi^2 Q_z=\sum_{z\in V_\gamma} \Phi^2 Q_z=\sum_{z\in V_\gamma} \vert r_z\vert^2 Q_z.\]
Now $\lbrack\Phi^2,P\rbrack=0$ and $\gamma$ crosses $Q_z$ horizontally or vertically and only  edges of weight $\vert r_z\vert$  thus $\Phi^2 Q_z=\vert r_z\vert^2 Q_z$, see figure (\ref{fig:phi2}). It follows that  $\vert \Phi_\gamma\vert=\sum_{z\in V_\gamma}\vert r_z\vert Q_z$ which implies the assertion.

\begin{figure}[hbt]
\centerline {
\includegraphics[width=6cm]{./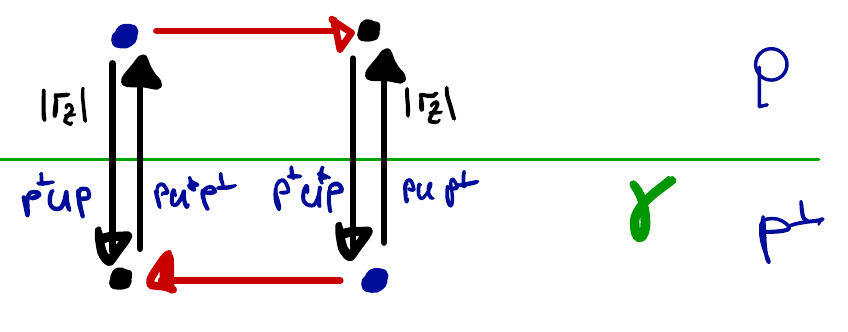}
}
\caption{The action of $\Phi^2Q_z$ for an $r$-path}
\label{fig:phi2}
\end{figure}

\end{proof}

\begin{prop}\label{prop:rpath} Let $\gamma$ be an admissible $r$-path in $I=\bR\setminus\lbrack-N,N\rbrack$ for an $N>1$ and such that  for a $c>0$ 
\[\vert r_z\vert\le c<1 \quad \forall z\in V_{\gamma\restriction I}\]
then $\ind(\Phi_\gamma)$ is well defined and it holds
\[\ind(\Phi_\gamma)=0.\]

The analogous statement holds true for a $t$-path.
\end{prop}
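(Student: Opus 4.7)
The plan is to handle the well-definedness of the index and its vanishing in sequence, reducing the latter to a combinatorial invariance argument applied after a local analysis at $r$-path nodes. For the well-definedness, I would apply Lemma~\ref{lem:estimate} to the restriction $\gamma \restriction I$, obtaining $\|\Phi_\gamma Q_z\| \le |r_z| \le c < 1$ for every $z \in V_{\gamma \restriction I}$. Together with $\Phi_\gamma Q_z = 0$ for $z \notin V_\gamma$ and the finite-rank contribution from the finitely many $z \in V_{\gamma \restriction [-N,N]}$, this forces $\sigma_{ess}(\Phi_\gamma^2) \subseteq [0, c^2]$, so the eigenvalue $1$ of $\Phi_\gamma^2$ is isolated of finite multiplicity and $\ind(\Phi_\gamma)$ is well-defined. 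Proposition~\ref{prop:indexformula}(3), combined with the fact that each $\Phi_\gamma Q_z$ is a $2\times 2$ difference of projections whose kernel-index equals its trace, then yields the finite formula
\[
\ind(\Phi_\gamma) \;=\; \sum_z \bigl(n_{out}(z) - n_{in}(z)\bigr),
\]
with $n_{in}(z) = \dim \ran(P_\gamma Q_z)$ and $n_{out}(z) = \dim \ran(P_\gamma \widehat Q_z)$.

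Next I would establish local vanishing at the $r$-path nodes. For each $z \in V_{\gamma \restriction I}$, the dual-graph dictionary of Figure~\ref{fig:notations} and Remark~\ref{rem:directions} force $\gamma$ to traverse the dual vertex of $z$ along its two $r$-links, so both $r$-edges of $z$ are bisected. Since every $r$-edge of a scattering node pairs one in-vertex with one out-vertex, the four vertices of $z$ split symmetrically into two pairs, each containing one in- and one out-vertex on opposite sides of $\gamma$, giving $n_{in}(z) = n_{out}(z) = 1$. For $z \notin V_\gamma$ all four vertices of $z$ lie on the same side of $\gamma$, so the contribution is again zero. The index therefore reduces to a finite sum over the bounded set $V_{\gamma \restriction [-N,N]}$.

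For the residual cancellation I would use the combinatorial invariance that swapping a single vertex $v$ between $V_+$ and $V_-$ alters $n_{in}(z_{in}(v))$ and $n_{out}(z_{out}(v))$ each by $\pm 1$, producing zero net change in $\sum_z (n_{out}(z) - n_{in}(z))$. Choosing a canonical pure $r$-path $\gamma_0$ with the same asymptotic behaviour as $\gamma$ outside $[-N,N]$, the associated partitions $V_+$ and $V_+^{(0)}$ agree outside a compact set, so they differ in only finitely many vertices. Transforming $V_+$ into $V_+^{(0)}$ by this many single-vertex swaps preserves the sum, while for $\gamma_0$ every node contributes zero by the local vanishing. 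Hence $\ind(\Phi_\gamma) = 0$; the $t$-path case is identical, swapping the roles of $r$- and $t$-edges in the geometric step.

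The main obstacle will be the construction of the canonical pure $r$-path $\gamma_0$ with matching asymptotics: one must use that at $(Odd, Even)$-type dual vertices both horizontal and vertical $r$-links are available, so that an $r$-path may turn while remaining within its combinatorial class, and combine this with the admissibility of $\gamma \restriction I$ to connect the two asymptotic $r$-path ends by a globally pure $r$-path.
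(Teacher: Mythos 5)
Your proposal is correct and follows essentially the same route as the paper: well-definedness and reduction to the finite local formula via Proposition~\ref{prop:indexformula} and Lemma~\ref{lem:estimate}, term-by-term vanishing of $\dim\ran(P\widehat{Q}_z)-\dim\ran(PQ_z)$ at pure $r$-nodes, and deformation of the middle portion of $\gamma$ to a globally pure $r$-path by turning at $Odd\times Even$ dual vertices. The only cosmetic difference is that you justify invariance under this deformation by an explicit single-vertex-swap computation in the sum $\sum_z\bigl(n_{out}(z)-n_{in}(z)\bigr)$, whereas the paper simply invokes stability of the index under the finite-rank perturbation $P_\gamma-P_{\widehat\gamma}$.
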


\begin{proof}  We suppress the subscript $\gamma$ to $\Phi$ and $P$ for the proof. It follows from Proposition (\ref{prop:indexformula}) and Lemma (\ref{lem:estimate}) that $\ind(\Phi)$ is well defined because 
\[\sum_{z\in V_\gamma} \Phi Q_z-\sum_{z\in V_{\gamma\restriction I}} \Phi Q_z\] is of finite rank 
and
\[\ind(\Phi)=\sum_{z\in V_{\gamma\restriction_{[-N,N]}}} \dim\ran(P\widehat{Q}_z)-\dim\ran(PQ_z).\]

 If $\gamma$ is already an $r$-path in all $\bR$  then $ \dim\ran(P\widehat{Q}_z)=\dim\ran(PQ_z)=1, \forall z\in V_\gamma$ because only $r$ edges are bisected thus $P$ splits $\ran Q_z$ and $\ran\widehat{Q}_z$ horizontally or vertically according parity, c.f. (\ref{rem:directions}). Thus  $\ind(\Phi_\gamma)=0$.
 
 If not,  choose two integers $N_\pm\in\bZ$ such that $ \pm N_\pm>\pm N$ and such that $\gamma(N_\pm)\in Odd \times Even$ thus all links incident to the dual vertices $\gamma(N_\pm)$  bisect edges of weight $\vert r\vert$. Then we can define a new admissible path $\widehat\gamma$  replacing the part $\gamma\restriction_{(-N_-,N_+)}$ by an $r$ path inside $G_+$ connecting $\gamma(N_-)$ to $\gamma(N_+)$. The difference $P_\gamma-P_{\widehat\gamma}$ is of finite rank thus 
 \[\ind(\Phi_\gamma)=\ind(\Phi_{\widehat\gamma})=0.\]
\end{proof}
\begin{thm} \label{thm:main} Let $U$ be a Chalker Coddington model such that there exists  an admissible path $\gamma$  which is an $r$-path in $(-\infty, -N\rbrack$ and a  $t$-path in $\lbrack N, \infty)$ for an integer $N>1$, and such that  for a $c>0$ 
\[\vert r_z\vert\le c<1 \quad \forall z\in V_{\gamma\restriction (-\infty, -N\rbrack} \hbox{ and } \vert t_z\vert\le c<1 \quad \forall z\in V_{\gamma\restriction \lbrack N, \infty)}.\]
Then $\ind(\Phi_\gamma)$ is well defined and it holds
\[\vert\ind(\Phi_\gamma)\vert=1.\]
In addition
\begin{enumerate}
\item $\Phi_\gamma$ is compact iff \quad$\lim_{s\to-\infty}{r_{\gamma(s)}}=0=\lim_{s\to\infty}{t_{\gamma(s)}}$
and then  $\sigma(U)=S^1$
\item $\Phi_\gamma$ is trace class iff  \quad$\sum_{z\in V_{\gamma\restriction (-\infty, -N\rbrack} }\vert r_z\vert + \sum_{z\in V_{\gamma\restriction \lbrack N, \infty)}} \vert t_z\vert <\infty$
and then 
\[\sigma_{ac}(U)=S^1\]
and a trace class pertubation of $U$ contains a shift operator.
\end{enumerate}

\end{thm}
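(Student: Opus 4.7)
First I would verify that $\ind(\Phi_\gamma)$ is well defined. On the $r$-tail, Lemma \ref{lem:estimate} gives $\|\Phi_\gamma Q_z\|\le|r_z|\le c<1$ for $z\in V_{\gamma\restriction(-\infty,-N]}$, and the obvious $t$-path analogue yields $\|\Phi_\gamma Q_z\|\le|t_z|\le c<1$ for $z\in V_{\gamma\restriction[N,\infty)}$. Since $[\Phi_\gamma^2,Q_z]=0$ by Proposition \ref{prop:indexformula}(1) and only finitely many labels $z$ fall in the transition zone $|s|\le N$, the criterion in Proposition \ref{prop:indexformula}(3) is met and the index is defined.

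Next I would evaluate it by extending the finite sum of Proposition \ref{prop:indexformula}(3) to the full label set:
\[\ind(\Phi_\gamma)=\sum_{z\in V_\gamma}\bigl(\dim\ran(P_\gamma\widehat{Q}_z)-\dim\ran(P_\gamma Q_z)\bigr).\]
The argument at the end of the proof of Proposition \ref{prop:rpath} shows that every $z$ whose adjacent links all lie in an $r$-section (resp.\ $t$-section) of $\gamma$ is bisected $1+1$ by $P_\gamma$ on both $\ran Q_z$ and $\ran\widehat{Q}_z$, yielding a vanishing contribution. Hence only vertices $z$ at which $\gamma$ switches between $r$-type and $t$-type bisection produce nonzero terms. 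Using the explicit finite rank variation of $P_\gamma$ under a bounded deformation of $\gamma$ inside $[-N,N]$ that preserves admissibility and both tails, one can reduce to a single transition vertex $z_\ast$. A case by case inspection guided by Remark \ref{rem:directions} of how $P_\gamma$ splits the two-dimensional spaces $\ran Q_{z_\ast}$ and $\ran\widehat{Q}_{z_\ast}$ shows that an $r\to t$ transition contributes $+1$ and a $t\to r$ transition contributes $-1$; since $\gamma$ starts as $r$ and ends as $t$, the total is $\pm1$.

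The spectral consequences then follow from Theorem \ref{thm:general}: Lemma \ref{lem:estimate} translates compactness of $\Phi_\gamma$ to $r_{\gamma(s)},t_{\gamma(s)}\to0$ on the respective tails, and the trace class property to the two summability conditions. Since $\ind(\Phi_\gamma)\neq0$, items (2) and (3) of Theorem \ref{thm:general} yield $\sigma(U)=S^1$ and $\sigma_{ac}(U)=S^1$ respectively, while the bilateral shift $S$ of multiplicity one in the decomposition $\widehat{U}=S\oplus\widetilde{U}$ of item (1) supplies the shift hidden in a trace class perturbation of $U$. The main difficulty I foresee is the local bookkeeping at $z_\ast$: Remark \ref{rem:directions} restricts which parities $(x_1^\ast,x_2^\ast)$ may host a transition, and one must check, in each allowed configuration and for each orientation of $G_+$ relative to $\gamma$, that the split of $\ran Q_{z_\ast}$ by $P_\gamma$ differs from the split of $\ran\widehat{Q}_{z_\ast}$ by exactly one dimension.
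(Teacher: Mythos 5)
Your proposal is correct and follows essentially the same route as the paper: establish well-definedness of the index via the tail bounds of Lemma \ref{lem:estimate} and Proposition \ref{prop:indexformula}, deform $\gamma$ by a finite-rank change of $P_\gamma$ to a path with a single $r$-to-$t$ crossover, read off $\vert\ind(\Phi_\gamma)\vert=1$ from the local splitting of $\ran Q_{z_0}$ versus $\ran\widehat{Q}_{z_0}$ at that crossover, and then invoke Theorem \ref{thm:general} for the spectral and shift statements. The local bookkeeping you flag as the main difficulty is exactly where the paper also concentrates its effort (via the parity constraints of Remark \ref{rem:directions} and a symmetry argument showing the two relevant edges are both incoming or both outgoing to $G_+$).
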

\begin{proof}If $\gamma$ is a path which switches from $r$ to $t$ on one site, i.e : if $\gamma$ is an $r$-path in $(-\infty, 0)$ and a $t$-path in $[0,\infty)$, then as a corollary of Propositions \ref{prop:indexformula} and \ref{prop:rpath} we have
\[\ind(\Phi)=\sum_{z\in V_{\gamma\restriction_{[-N,N]}}} \dim\ran(P\widehat{Q}_z)-\dim\ran(PQ_z)=\dim\ran(P\widehat{Q}_{z_0})-\dim\ran(PQ_{z_0})\]
where $z_0$ is such that  $\gamma(0)$ is the center of the face defined by the vertices in $\ran\widehat{Q}_{z_0}$ and $\ran Q_{z_0}$. By the symmetry of the problem these two edges are either incoming or outgoing to $G_+$ thus  $\dim\ran PQ_{z_0}=1$ and $\dim\ran P\widehat{Q}_{z_0}=0$ or the other way round, see figure (\ref{fig:crossover}). Thus $\vert \ind(\Phi_\gamma)\vert=1$.

If not,  choose two integers $N_\pm\in\bZ$ such that $ \pm N_\pm>\pm N$ and such that $\gamma(N_-)\in Odd \times Even$ and $\gamma(N_+)\in Even \times Odd$ thus all links incident to the dual vertices $\gamma(N_-)$  bisect edges of weight $\vert r\vert$ and all links incident to the dual vertices $\gamma(N_+)$  bisect edges of weight $\vert t\vert$. Then we can define a new admissible path $\widehat\gamma$  replacing the part $\gamma\restriction_{(-N_-,N_+)}$ by an $r$ path inside $G_+$ connecting $\gamma(N_-)$ to $\gamma(N_+)$, c.f. figure (\ref{fig:joinrt}). The difference $P_\gamma-P_{\widehat\gamma}$ is of finite rank thus 
 \[\ind(\Phi_\gamma)=\ind(\Phi_{\widehat\gamma}).\]
 The additional assertions follow from lemma \ref{lem:estimate} and theorem \ref{thm:general}.
\end{proof}

As by unitarity it holds $\vert r_z\vert^2+\vert t_z\vert^2=1$ it follows:
\begin{cor} The critical Chalker Coddington model defined by $\vert r_z\vert=\vert t_z\vert=\frac{1}{\sqrt{2}}, \forall z$ admits a projection $P$ with non-trivial index
\[\ind(\Phi)\neq0\]
for any distribution of phases of its scattering matrices. 

More generally: a Chalker Coddington model defined by a collection of scattering matrices such that  for  $c_1,c_2\in(0,1)$ and all $z$:  $0<c_1<\vert r_z\vert<c_2<1$, admits a projection with non trivial index.
\end{cor}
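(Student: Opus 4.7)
The plan is to exhibit one explicit admissible dual path $\gamma$ satisfying the hypotheses of Theorem \ref{thm:main}, and then apply that theorem with the projection $P=P_\gamma$. Since the construction depends only on graph geometry and not on the phases $q_z$ or on the precise values of $|r_z|,|t_z|$, the same $\gamma$ will serve both cases of the corollary.

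By Remark \ref{rem:directions}, a horizontal link of $G^\ast$ lying in a row $x_2^\ast\in Even$ bisects an edge of weight $|r|$, while a horizontal link in a row $x_2^\ast\in Odd$ bisects an edge of weight $|t|$. I would therefore fix two distinct dual rows $y_-^\ast\in Even$ and $y_+^\ast\in Odd$, let $\gamma$ traverse $y_-^\ast$ leftward for $s\le -N$ and $y_+^\ast$ rightward for $s\ge N$, and join the two half-lines by a simple finite dual path on $[-N,N]$ (a monotone staircase suffices). The left half is then automatically an $r$-path and the right half a $t$-path, while the middle segment is unrestricted, since the $r/t$ condition in Theorem \ref{thm:main} is only required outside $[-N,N]$. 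Orienting $\gamma$ so that $G_+$ lies to its left, set $P=P_\gamma=\chi(x\in V_+)$.

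Next I would verify the two quantitative bounds. For the critical model $|r_z|=|t_z|=1/\sqrt{2}$, so both hypotheses hold with $c=1/\sqrt{2}<1$, independently of the phases $q_z$. For the more general regime $0<c_1\le |r_z|\le c_2<1$, the $r$-bound is given, and unitarity $|r_z|^2+|t_z|^2=1$ yields $|t_z|\le \sqrt{1-c_1^2}<1$, so the $t$-bound also holds uniformly along the right half of $\gamma$. Theorem \ref{thm:main} then gives $|\ind(\Phi_\gamma)|=1\neq 0$, which is the required projection in both cases.

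The main obstacle is checking the geometric admissibility of $\gamma$: we need it injective as a parametrised curve in $\bR^2$ and we need $\bZ^2$ to split through $\gamma$ into exactly two infinite connected vertex subsets $V_\pm$, so that $P_\gamma$ genuinely corresponds to a splitting of the Chalker--Coddington graph by a dual curve. Injectivity and connectedness of $V_\pm$ are automatic for the two infinite horizontal half-lines on distinct rows, and are easily arranged on the middle connecting segment by e.g. a monotone staircase between their endpoints. Once admissibility is confirmed, the statement is an immediate application of Theorem \ref{thm:main}.
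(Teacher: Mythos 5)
Your proof is correct and follows essentially the same route as the paper, which deduces the corollary from Theorem \ref{thm:main} via the unitarity relation $\vert r_z\vert^2+\vert t_z\vert^2=1$ to bound $\vert t_z\vert\le\sqrt{1-c_1^2}<1$. The paper leaves the existence of a suitable admissible path implicit (relying on Remark \ref{rem:directions} and the figures), whereas you spell out the explicit two-half-line-plus-staircase construction; that is a welcome but not substantively different addition.
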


\begin{figure}
\centerline {
\includegraphics[width=5cm]{./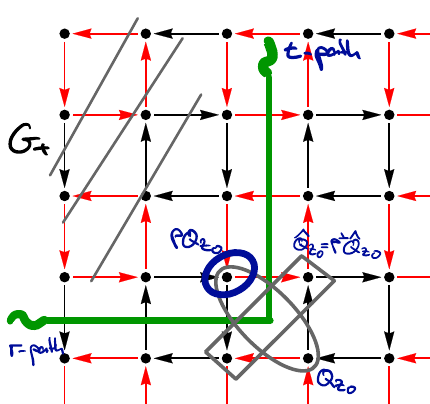}
}
\caption{Crossover from $r$ to $t$.}
\label{fig:crossover}
\end{figure}

\begin{figure}
\centerline {
\includegraphics[width=8cm]{./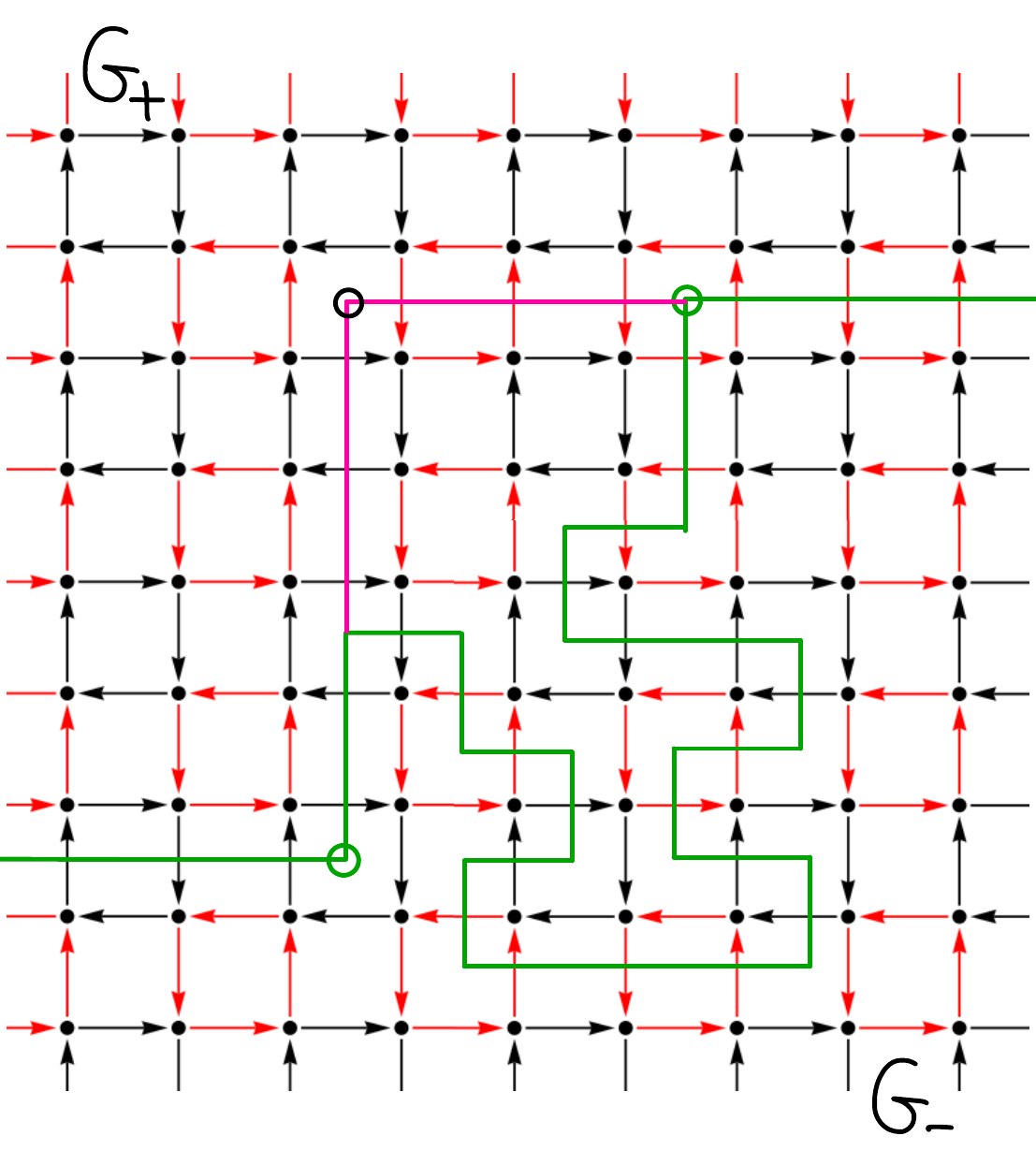}
}
\caption{Join r path to t path.}
\label{fig:joinrt}
\end{figure}

\begin{rem}\label{rem:cap}Observe that the Corollary applies to the dynamically localised case studied in \cite{ABJ2,ABJ1}, i.e.: $\vert r_z\vert=const\neq0$ sufficiently small and i.i.d. and  uniformly distributed phases. Another remarkable situation which is covered  by the Corollary  is the translation invariant $0<\vert r_z\vert=const\neq\frac{1}{\sqrt{2}}$ case with all phases equal to unity, here the spectrum is absolutely continuous and exhibits gaps.  However, as in the critical case $\vert r_z\vert=\frac{1}{\sqrt{2}}$, the flux operator $\Phi_\gamma$ is not trace class or even compact. 

To close this analysis of the transport properties of the Chalker-Coddington model, we argue that the random case is  an example of a non trivial anomalous charge transport in a regime of Anderson localisation in the following sense: consider $U$ defined by $\left\{S_{z}\right\}_{z\in\bZ\times2\bZ}$such that $\vert r_z\vert=const\neq0$ and such that $\sigma(U)$ is pure point. For $\varepsilon_0>0$ choose   an admissible path $\gamma$  like in Theorem \ref{thm:main} which is an $r$-path in $(-\infty, -\frac{1}{\varepsilon_0}\rbrack$ and a  $t$-path in $\lbrack \frac{1}{\varepsilon_0}, \infty)$. Now define for $0<\varepsilon\le\varepsilon_0$ $U_\varepsilon$ by the scattering matrices
\[S^{\varepsilon}_z:=
\left\{
\begin{array}{cl}
 \left(
\begin{array}{cc}
  0   &-1   \\
 1   &  0 
\end{array}\right) &    z\in V_{\gamma\restriction (-\infty, -\frac{1}{\varepsilon}\rbrack}   \\
 \left(
\begin{array}{cc}
  1  &0  \\
 0   & 1  
\end{array}\right) &    z\in V_{\gamma\restriction (\frac{1}{\varepsilon}, \infty\rbrack}    \\
  S_z&      z \hbox{ elsewhere }
\end{array}
\right.
\]

Let  $\Phi:=U^\ast P_\gamma U- P_\gamma$, $\Phi_\varepsilon:=U^\ast_\varepsilon P_\gamma U_\varepsilon- P_\gamma$.   $\Phi_\varepsilon$ is trace class, thus, considered as a symbol, its fermionic second quantisation is a bona fide operator in the $CAR$ observables. Furthermore for  the full quasifree state, whose symbol is the identity operator $\bI$, the mean transported charge across $\gamma$ reads: $trace(\bI\Phi_\varepsilon)=trace(\Phi_\varepsilon)=\ind(\Phi_\varepsilon)$. From  Theorem ({\ref{thm:main}}) we know that $\vert\ind(\Phi_\varepsilon)\vert=1$ so that 
\[\lim_{\varepsilon\to0}trace(\bI\Phi_\varepsilon)=\ind(\Phi)\neq0\]
and we have an anomalous charge transport by $U$ across $\gamma$.  Note that $U_\varepsilon\to U$  and $\Phi_\varepsilon\to\Phi$ strongly.

One might draw an analogy between this very concrete example and  the phenomenon  of anomalous currents in gauge field theories, see \cite{NAK}.
\end{rem}

{\bf\large  Acknowledgments} \vspace{.2cm}

We thank Claude Alain Pillet for insightful discussions. We acknowledge the warm hospitality and support of the Centre de Recherches Math\'ematiques in Montr\'eal: A.J. was a CRM-Simons Scholar-in-Residence Program, and J.A. was a guest of the Thematic Semester Mathematical Challenges in Many Body Physics and Quantum Information.

\end{document}